\newcommand{\bR}{\mathbb{R}}
\newcommand{\bP}{\mathbb{P}}
\newcommand{\mintwo}[2]{\min_{\substack{#1 \\ #2}}}
\newcommand{\D}{\displaystyle}
\newcommand{\R}{{\mathbb R}}
\newcommand{\Z}{{\mathbb Z}}
\newcommand{\N}{{\mathbb N}}
\newcommand{\tr}{{\rm tr}\,}
\newcommand{\vers}{\operatornamewithlimits{\to}}
\newcommand{\esp}{\mathbb{E}}
\newcommand{\pro}{\mathbb{P}}
\newcommand{\car}{\bold{1}}
\newtheorem{theorem}{Theorem}
\newtheorem{lemma}[theorem]{Lemma}
\newtheorem{proposition}[theorem]{Proposition}
\newtheorem{corollary}[theorem]{Corollary}
\theoremstyle{definition}
\newtheorem{definition}[theorem]{Definition}
\newtheorem{remark}[theorem]{Remark}
\newcommand{\Hmm}[1]{\leavevmode{\marginpar{\tiny%
$\hbox to 0mm{\hspace*{-0.5mm}$\leftarrow$\hss}%
\vcenter{\vrule depth 0.1mm height 0.1mm width \the\marginparwidth}%
\hbox to 0mm{\hss$\rightarrow$\hspace*{-0.5mm}}$\\\relax\raggedright
#1}}}
\begin{document}

\title[The GPEF and condensation in the single particle ground
state]{The Gross-Pitaevskii functional with a random background
  potential and condensation in the single particle ground state}
\author[F. Klopp, B. Metzger]{Fr{\'e}d{\'e}ric Klopp, Bernd Metzger}
\address{Fr{\'e}d{\'e}ric Klopp, CNRS UMR LAGA 7539,
  Institut Galil{\'e}e, Universit{\'e} Paris 13, Paris, France\\ and \\
  Institut Universitaire de France}
\email{\href{mailto:klopp@math.univ-paris13.fr}{klopp@math.univ-paris13.fr}}
\address{Bernd Metzger, CNRS UMR LAGA 7539, Institut Galil{\'e}e,
  Universit{\'e} Paris 13, Paris, France}
\email{\href{mailto:metzger@math.univ-paris13.fr}{metzger@math.univ-paris13.fr}}
\thanks{The authors were partially supported by the grant
  ANR-08-BLAN-0261-01}

\begin{abstract}
  For discrete and continuous Gross-Pitaevskii energy functionals with
  a random background potential, we study the Gross-Pitaevskii ground
  state. We characterize a regime of interaction coupling when the
  Gross-Pitaevskii ground state and the ground state of the random
  background Hamiltonian asymptotically coincide.
\end{abstract}

\maketitle
 
\section{Introduction}
\noindent The purpose of the present paper is to study some aspects of
condensation in the ground state of the Gross-Pitaevskii energy functional with a
disordered background potential.  As they can be treated very  similar, we
consider the  discrete and the  continuous setting simultaneously. \\[-2mm]
\begin{center}
  \begin{minipage}{12.2cm}
    {\bf The continuous setting:} \\[2mm]
    In $\R^d$, consider the cube $\Lambda_L=[-L,L]^d$ of side length
    $2L$ and volume $|\Lambda_L|=(2L)^d$. In
    $\mathcal{H}_L:=L^2(\Lambda_L)$, on the domain
    $\mathcal{D}_L:=H^2(\Lambda_L)$, consider ${\rm
      H}^P_{\omega,L}=(-\Delta+V_{\omega})^P_{\Lambda_L}$ the
    continuous self-adjoint Anderson model on $\Lambda_L$ with
    periodic boundary conditions. We assume
    \begin{itemize}
    \item $\Delta= \sum_{j=1}^d\partial^2_j$ is the continuous Laplace
      operator;
    \item $V_{\omega}$ is an ergodic random potential
      i.e. an ergodic random field over $\R^d$ that satisfies\\[-2mm]
      \begin{align*}
        \forall\alpha\in\N^d,\ \|\|\partial^\alpha
        V_\omega\|_{x,\infty}\|_{\omega,\infty}<+\infty\\[-2mm]
      \end{align*}
      where $\|\cdot\|_{x,\infty}$ (resp.
      $\|\cdot\|_{\omega,\infty}$) denotes the supremum norm in $x$
      (resp. $\omega$).
    \end{itemize}
    These assumptions are for example satisfied by a continuous
    Anderson model with a smooth compactly supported single site
    potential i.e. if
    \begin{equation*}
      V_\omega(x)=\sum_{\gamma\in\Z^d}\omega_\gamma u(x-\gamma)  
    \end{equation*}
    where $u\in\mathcal{C}_0^{\infty}(\R^d)$ and
    $(\omega_\gamma)_{\gamma\in\Lambda_L}$ are bounded, non negative
    identically distributed random variables.
  \end{minipage}
\end{center}
\begin{center}
  \begin{minipage}{12.2cm} {\bf The discrete setting:}\\[2mm] On the
    finite discrete cube $\Lambda_L=[-L,L]^d\cap\Z^d\subset\Z^d$ the
    cube of side length $2L+1$ and volume $|\Lambda_L|=(2L+1)^d$, let
    ${\rm H}^P_{\omega,L}=(-\Delta+V_{\omega})^P_{\Lambda_L}$ the
    discrete Anderson model on
    $\mathcal{D}_L=\mathcal{H}_L:=\ell^2(\Lambda_L)$ with periodic
    boundary conditions. We assume
    \begin{itemize}
    \item $\left( -\Delta\right)^P_{\Lambda_L}$ is the discrete
      Laplacian;
    \item $V_{\omega}$ is a potential i.e. a diagonal matrix entries
      of which are are given by bounded non negative random variables,
      say $\omega=(\omega_\gamma)_{\gamma\in\Lambda_L}$.
    \end{itemize}
  \end{minipage}
\end{center}
\noindent For the sake of definiteness, we assume that the infimum of
the (almost sure) spectrum of $H_\omega$ be $0$. We define
\begin{definition}[\textbf{Gross-Pitaevskii energy functional [GPEF]}]
  \label{discrete Gross-Pitaevskii model}
  The (one-particle) Gross-Pitaevskii energy functional on the cube
  $\Lambda_L$ (in the discrete or in the continuous) is defined by
  \begin{equation}
    \label{Gross-Pitaevskii}
    {\mathcal E}^{GP}_{\omega,L}[\varphi]= \langle {\rm H^P_{\omega,L}}
    \varphi,\varphi \rangle +{\rm U }\|\varphi\|_4^4
  \end{equation}
  for $\varphi\in\mathcal{D}_L$ and ${\rm U}$ is a positive coupling
  constant.
\end{definition}
\noindent For applications, it is natural that this coupling constant
is related to $|\Lambda_L|$. We refer to the discussion following
Theorem~\ref{Theorem1} for details. One proves
\begin{proposition}
  \label{pro:1}
  For any $\omega\in\Omega$ and $L\geq1$, there exists a ground state
  $\varphi^{\rm GP}$ i.e. a vector $\varphi^{\rm GP} \in
  \mathcal{D}_L$ such that $\|\varphi^{\rm GP}\|_2=1$   minimizing
  the  Gross-Pitaevskii energy functional, i.e.
  \begin{equation}
    \label{ground state}
    {\rm E}_{\omega,L}^{\rm GP}={\mathcal E}^{GP}_{\omega,L}[\varphi^{\rm
      GP}] =\mintwo{\varphi \in \mathcal{D}_L}{\|\varphi\|_2=1}
    {\mathcal E}^{GP}_{\omega,L}[\varphi].
  \end{equation}
  The ground state $\varphi^{\rm GP}$ can be chosen positive; it is
  unique up to a change of phase.
  ${\rm E}_{\omega,L}^{\rm GP}$ denotes the ground state energy of the
  discrete Gross-Pitaevskii functional.
\end{proposition}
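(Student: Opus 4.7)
\emph{Existence.} In the discrete case $\mathcal{H}_L$ is finite-dimensional, so the unit sphere is compact and continuity of $\mathcal{E}^{GP}_{\omega,L}$ immediately yields a minimizer. In the continuous case I would work on the form domain $H^1(\Lambda_L)$. Since $H^P_{\omega,L}\ge0$ and $V_\omega$ is bounded, the quadratic form is equivalent to the $H^1$-norm squared up to an additive constant, and the quartic term is non-negative; hence any minimizing sequence $(\varphi_n)$ is bounded in $H^1$. By Rellich--Kondrachov on the bounded domain $\Lambda_L$, a subsequence converges weakly in $H^1$ and strongly in $L^2$ and in $L^4$ (via the Sobolev embedding $H^1\hookrightarrow L^4$ valid in the dimensions considered in the paper). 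Weak lower semicontinuity of $\|\nabla\cdot\|_2^2$, continuity of the potential and quartic terms under strong convergence, and preservation of the $L^2$-constraint then produce a minimizer; elliptic regularity applied to the Euler--Lagrange equation places it in $\mathcal{D}_L=H^2(\Lambda_L)$.

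\emph{Positivity.} For any admissible $\varphi$ the substitution $\varphi\leadsto|\varphi|$ preserves the $L^2$ and $L^4$ norms and the potential energy $\langle V_\omega\varphi,\varphi\rangle$, while a diamagnetic-type inequality
\begin{equation*}
  \langle(-\Delta)^P_{\Lambda_L}|\varphi|,|\varphi|\rangle\le\langle(-\Delta)^P_{\Lambda_L}\varphi,\varphi\rangle
\end{equation*}
lowers the kinetic term --- via Kato's inequality in the continuum, and via the elementary bound $\bigl||\varphi(x)|-|\varphi(y)|\bigr|\le|\varphi(x)-\varphi(y)|$ in the discrete setting. Thus one may assume $\varphi^{\mathrm{GP}}\ge0$. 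Strict positivity then follows from the Euler--Lagrange equation $\bigl(H^P_{\omega,L}+2{\rm U}(\varphi^{\mathrm{GP}})^2\bigr)\varphi^{\mathrm{GP}}=\mu\,\varphi^{\mathrm{GP}}$: by the strong maximum principle in the continuous case, and in the discrete case by the observation that if $\varphi^{\mathrm{GP}}(x_0)=0$ the equation at $x_0$ forces $\sum_{y\sim x_0}\varphi^{\mathrm{GP}}(y)=0$, whence $\varphi^{\mathrm{GP}}\equiv0$ by connectedness of the torus $\Lambda_L$, contradicting the normalization.

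\emph{Uniqueness up to phase.} The plan is to reduce the problem to a strictly convex one in the density $\rho=|\varphi|^2$. On the affine set $\{\rho\ge0,\ \int\rho=1\}$ (resp.\ $\sum\rho=1$) the functional $\mathcal{E}^{GP}_{\omega,L}[\sqrt\rho]$ splits as $\int V_\omega\rho$ (linear), the kinetic part (convex in $\rho$: in the continuum by the classical convexity of the Fisher information $\tfrac14\int|\nabla\rho|^2/\rho$, coming from joint convexity of $(x,y)\mapsto x^2/y$; in the discrete case via $(a,b)\mapsto a+b-2\sqrt{ab}$, convex because the geometric mean $\sqrt{ab}$ is concave), and the quartic part ${\rm U}\int\rho^2$ (strictly convex in $\rho$). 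Strict convexity of the sum gives uniqueness of the positive minimizer $\varphi_+$. For a general complex minimizer $\varphi$, $|\varphi|$ is again a minimizer, hence $|\varphi|=\varphi_+$; writing $\varphi=\varphi_+e^{i\theta}$ and inspecting the equality case in the diamagnetic inequalities (which forces $\varphi_+\nabla\theta\equiv0$ in the continuum and $\theta(x)=\theta(y)$ on every edge with $\varphi_+(x)\varphi_+(y)>0$ in the discrete case) yields $\theta$ constant, by the strict positivity of $\varphi_+$ already established.

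The main obstacle is the last step. The change of variables $\varphi\leftrightarrow\rho$ is genuinely singular where $\rho$ vanishes (because of the $|\nabla\rho|^2/\rho$, respectively $\sqrt{ab}$, terms), so the convexity argument cannot be run naively; this is why strict positivity of $\varphi_+$ has to be secured \emph{before} the uniqueness step rather than as a consequence of it. Once strict positivity is in hand, the convexity-in-$\rho$ argument treats the continuous and discrete cases by one and the same mechanism.
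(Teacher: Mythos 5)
The paper gives no proof of its own here; it simply cites Lieb--Seiringer--Solovej--Yngvason \cite{Lieb-Buch} for the continuous case and declares the discrete case similar, and your reconstruction (direct method for existence, diamagnetic/Kato inequality for positivity, strict convexity of the functional in the density $\rho=|\varphi|^2$ for uniqueness up to phase) is precisely that standard argument and is correct. One small imprecision worth flagging: the compact embedding $H^1(\Lambda_L)\hookrightarrow L^4(\Lambda_L)$ you invoke is compact only for $d\le3$ (continuous but not compact at $d=4$, false for $d\ge5$), whereas the paper works in general dimension; however strong $L^4$ convergence is not actually needed --- the quartic term is weakly lower semicontinuous in $L^4$, a minimizing sequence is $L^4$-bounded because the energy controls ${\rm U}\|\varphi_n\|_4^4$, and the weak $L^4$ limit is identified with the strong $L^2$ limit by uniqueness of weak limits --- so with this slight rephrasing the existence step works in every dimension where the functional is finite. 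Your closing remark about the singularity of the $\varphi\leftrightarrow\rho$ change of variables is more cautious than necessary: the strict convexity you need comes entirely from the quartic term $\int\rho^2$, while the Fisher-information kinetic term only needs to be convex (allowing the value $+\infty$), so the uniqueness argument does not in fact require strict positivity to be established first.
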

\noindent The proof in the continuous case is given
in~\cite{Lieb-Buch}; the proof in the discrete case is similar.\\[2mm]
Let $H^N_{\omega,L}$ and $H^D_{\omega,L}$ respectively denote the
Neumann and Dirichlet restrictions of $H_\omega$ to $\Lambda_L$.  Our
main assumptions on the random model are: 
\begin{center}
  \begin{minipage}{12.2cm} {\bf (H0) Decorrelation estimate:} the model satisfies a finite
    range decorrelation estimate i.e. there exits $R>0$ such that, for
    any $J\in\N^*$ and any sets $(D_j)_{1\leq j\leq J}$, if
    \begin{align*}
    \D\inf_{j\not=j'}{\text dist}(D_j,D_{j'})\geq R,
    \end{align*}
    then the restrictions
    of $V_\omega$ to the domains $D_j$, i.e. the functions
    $(V_{\omega|D_j})_{1\leq j\leq J}$, are independent random fields. \\
    {\bf (H1) Wegner estimate:} There exists $C>0$
    such that, for any compact interval $I$ and $\bullet\in\{P,N,D\}$,
    \begin{equation*}
      \esp[\tr(\car_{I}(H^\bullet_{\omega,L}))]\leq C|I|L^d;
    \end{equation*} $\;$\\[-2mm]
    {\bf (H2)    Minami estimate: } There exists $C>0$
    such that, for $I$ a compact interval and $\bullet\in\{P,N,D\}$,
    \begin{equation*}
      \pro[\{H^\bullet_{\omega,L}\text{ has at least two eigenvalues in
      }I\}]\leq C(|I|L^d)^2; 
    \end{equation*} $\;$\\[-2mm]
    {\bf (H3) Lifshitz type estimate near energy ${\mathbf 0}$:}
    There exist constants $C>c>0$ such that, for $L\geq1$ and any
    parallelepiped $P_L=I_1\times\cdots\times I_d$ where the intervals
    $(I_j)_{1\leq j\leq d}$ satisfy $L/2\leq |I_j|\leq 2 L$, one has
    \begin{gather*}
      c e^{- L^d/c}\leq \pro[\{H^D_{\omega|P_L}\text{ has at least one
        eigenvalue in }[0,L^{-2}]\}],\\
      \pro[\{H^N_{\omega|P_L}\text{ has at least one eigenvalue in
      }[0,L^{-2}]\}]\leq C e^{-L^d/C}
    \end{gather*}
    where $H^D_{\omega|P_L}$ (resp. $H^D_{\omega|P_L}$) is the
    Dirichlet (resp. Neumann restriction) of $H_\omega$ to $P_L$.
    $\;$\\[-4mm]
  \end{minipage}
\end{center}
\noindent Let us now discuss the validity of these assumptions.\\
The decorrelation assumption (H0) is satisfied for the discrete
Anderson model described above if the random variables
$(\omega_\gamma)_{\gamma\in\Z^d}$ are i.i.d. (H0) clearly allows  
some correlation between the random variables. For the continuous
Anderson model, it is satisfied if the single site potential has
compact support and the random variables are i.i.d.\\
Under the assumption that the random variables are i.i.d and that their
distribution is regular, it is well known that the Wegner estimate
(H1) holds at all energies for both the discrete and continuous
Anderson model (see e.g.~\cite{KiMe,MR2378428,MR2362242}).\\
The Minami estimate (H2) is known to hold at all energies under
similar regularity assumptions for the discrete Anderson model (see
e.g.~\cite{MR97d:82046,MR2360226,MR2290333,MR2505733}) and for the continuous Anderson model in the
localization regime under more specific assumptions on the single site
potential  (see e.g.~\cite{CGK2}).\\
Finally, the Lifshitz tails estimate (H3) is known to hold for both
the continuous and discrete Anderson model under the sole assumption
that the i.i.d. random variables be non degenerate, non negative and
$0$ is in their essential range (see
e.g.~\cite{Ki-Ma:83b,MR833221,Kirsch}). Though the Lifshitz tails
estimate is usually not stated for parallelepipeds but for cubes, the
proof for cubes applies directly to parallelepipeds satisfying the
condition stated in (H3).
\vskip.4cm\noindent The main result of the present paper is
\begin{theorem}
  [\textbf{Condensation in the single particle ground state}]
  \label{Theorem1}
  Assume assumptions (H0)-(H3) hold. Denote by $\varphi_0$ the single
  particle ground state of ${\rm H}^P_{\omega,L}$ (chosen to be
  positive for the sake of definiteness) and by $\varphi^{\rm GP}$ the
  Gross-Pitaevskii ground state. \\
  If for $L$ large, one assumes that
  \begin{equation*}
    U=U(L)=   o\left(\frac{L^{-d}}{(1+(\log L)^{d-2/d+\epsilon})f_d(\log
        L)}\right)
  \end{equation*}
  where
  \begin{equation}
    \label{eq:12}
    f_d(\xi)=
    \begin{cases}
      \xi^{-1/4}&\text{ if }d\leq3,\\ \xi^{-1/d}\log\xi&\text{ if }d=4,\\
      \xi^{-1/d}&\text{ if }d\geq5.
    \end{cases}
  \end{equation}
  and $\epsilon=0$ in the discrete setting, resp.  $\epsilon>0$
  arbitrary in the continuous case,
  then, there exists $0<\eta(L)\to0$ when $L\to+\infty$ such that
  \begin{equation}
    \label{eq:13}
    \bP[ |\langle\varphi_0,\varphi^{\rm GP}\rangle-1|\geq
    \eta(L)\}]\vers_{L\to+\infty}0.
  \end{equation}$\;$\\[-4mm]
\end{theorem}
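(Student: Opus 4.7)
The strategy is a variational comparison against the single-particle ground state. Using $\varphi_0$ as a trial function for $\mathcal{E}^{GP}_{\omega,L}$ gives
\[
\mathcal{E}^{GP}_{\omega,L}[\varphi^{\rm GP}] \leq \mathcal{E}^{GP}_{\omega,L}[\varphi_0] = E_0 + U\|\varphi_0\|_4^4.
\]
Expand $\varphi^{\rm GP} = \sum_n c_n \varphi_n$ in the eigenbasis $(\varphi_n)_{n \geq 0}$ of ${\rm H}^P_{\omega,L}$ with eigenvalues $E_0 \leq E_1 \leq \cdots$. Dropping the nonnegative interaction term $U\|\varphi^{\rm GP}\|_4^4$ and using $E_n - E_0 \geq E_1 - E_0$ for $n \geq 1$ rearranges this into
\[
1 - |c_0|^2 \;\leq\; \frac{U\,\|\varphi_0\|_4^4}{E_1 - E_0}.
\]
Since $\varphi_0$ and $\varphi^{\rm GP}$ may both be chosen positive, $c_0 = \langle\varphi_0,\varphi^{\rm GP}\rangle \in [0,1]$ and $1 - c_0 \leq 1 - c_0^2$, so a probabilistic control of the right-hand side above implies~\eqref{eq:13}.

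The bulk of the work then consists of two a priori estimates on ${\rm H}^P_{\omega,L}$, each holding with probability $1-o(1)$. First, an upper bound on $\|\varphi_0\|_4^4$: the Lifshitz estimate~(H3) combined with Dirichlet--Neumann bracketing forces $E_0 \lesssim (\log L)^{-2/d}$ with high probability, and a quantitative localization argument driven by~(H1) and~(H3) shows that $\varphi_0$ is exponentially concentrated in a single random region of side $\ell \sim (\log L)^{1/d}$. Since $\|\varphi_0\|_2 = 1$, this yields $\|\varphi_0\|_4^4 \lesssim f_d(\log L)$, with $f_d$ as in~\eqref{eq:12}; the extra $\epsilon > 0$ in the continuous case accommodates the $H^2$-regularity loss in controlling $\|\varphi_0\|_\infty$ on a box of side $\ell$. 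Second, a lower bound on $E_1 - E_0$: tile $\Lambda_L$ into parallelepipeds of side $\ell$; decorrelation~(H0) makes the Dirichlet ground state energies of these pieces i.i.d., with Lifshitz tail at $0$ governed by~(H3), and up to exponentially small boundary errors the lowest eigenvalues of ${\rm H}^P_{\omega,L}$ coincide with the smallest order statistics of this i.i.d.\ sample. Partitioning $[0, C(\log L)^{-2/d}]$ into subintervals of length $\delta$ and applying Minami~(H2) on each via a union bound yields
\[
\mathbb{P}[E_1 - E_0 \leq \delta] \;\lesssim\; (\log L)^{-2/d}\,\delta\,L^{2d} + o(1),
\]
and a refined version exploiting the i.i.d.\ structure produces a gap $\delta_L$ for which the hypothesis $U(L) = o(\delta_L/f_d(\log L))$ matches the threshold in the statement.

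Combined with the variational inequality, these estimates give $1 - |c_0|^2 \leq U\,f_d(\log L)/\delta_L = o(1)$ in probability, producing the desired $\eta(L)$ in~\eqref{eq:13}. The main technical obstacle is the lower bound on $E_1 - E_0$: the bare Minami estimate~(H2) is tight only on intervals of length of order $L^{-d}$, whereas the Lifshitz regime suggests a true gap of order $(\log L)^{-1 - 2/d}$. Bridging these two scales rigorously requires combining~(H0), (H2) and~(H3) into a quantitative i.i.d.\ approximation of the lowest eigenvalues of ${\rm H}^P_{\omega,L}$, and the polylogarithmic exponent $d - 2/d + \epsilon$ appearing in the hypothesis on $U(L)$ reflects the loss incurred in making this approximation effective together with the concentration bound on $\|\varphi_0\|_4^4$.
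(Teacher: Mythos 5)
Your overall architecture is the right one and matches the paper's: a variational comparison
\begin{equation*}
1-|\langle\varphi_0,\varphi^{\rm GP}\rangle|^2 \;\leq\; \frac{U\,\|\varphi_0\|_4^4}{E_1-E_0}
\end{equation*}
(which is exactly the inequality the paper obtains with the projection $\pi_0$), plus probabilistic control of $\|\varphi_0\|_4^4$ from above and of the gap $E_1-E_0$ from below. However, both of your two ingredient estimates are not correctly derived, and the second one has a genuine gap that you acknowledge but do not close.

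On the bound for $\|\varphi_0\|_4^4$: you appeal to spatial exponential localization of $\varphi_0$ in a region of side $\ell\sim(\log L)^{1/d}$. The localization estimate available here (Lemma~\ref{Lemma4}) only gives $|\varphi_0(x)|\leq L^q e^{-\alpha|x-x_0|^\xi}$, i.e.\ $\|\varphi_0\|_\infty\leq L^q$; it does \emph{not} by itself give $\|\varphi_0\|_\infty\lesssim \ell^{-d/2}$, since it says nothing about $\varphi_0$ being flat on its core. The paper's actual mechanism is different: it uses the Lifshitz-tail bound $E_0^P\lesssim(\log L)^{-2/d}$ (Lemma~\ref{Lemma1}) to infer $\|\nabla\varphi_0\|_2^2\lesssim(\log L)^{-2/d}$ (Corollary~\ref{le:4}), and then a Fourier-side dyadic decomposition with H\"older (Lemmas~\ref{le:1} and~\ref{le:3}) to convert this low kinetic energy into $\|\varphi_0\|_4\leq Cg_d(\varepsilon)$ with $\varepsilon\sim(\log L)^{-1/d}$. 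This is an uncertainty-principle type argument, not a decay-of-eigenfunctions argument, and in the continuous case it additionally requires the a priori $H^n$ bound on $\varphi_0$ obtained by bootstrapping the eigenvalue equation under the smoothness hypothesis on $V_\omega$; that is where the extra $\eta$ (and hence your $\epsilon$) comes from. As written, your localization route does not produce the stated bound.

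On the gap estimate: your naive application of Minami on $\Lambda_L$ combined with a union bound over $\sim(\log L)^{-2/d}/\delta$ intervals gives $\mathbb{P}[E_1-E_0\leq\delta]\lesssim(\log L)^{-2/d}\delta L^{2d}$, i.e.\ a factor $L^{2d}$. The threshold on $U$ in the statement requires the much sharper $\delta L^{d}(1+(\log L)^{d-2/d+\epsilon})$. You flag this as the main obstacle and say ``a refined version exploiting the i.i.d.\ structure'' produces it, but you do not supply that refinement, and this is precisely where the paper's work lives. The paper's improvement comes from not applying Minami or Wegner to the whole box $\Lambda_L$ but to boxes of polylogarithmic side length $\sim\log L$ around the localization centers: it splits into (i) the event that $x_0$ and $x_1$ are within $\lambda(\log L)^{1/\xi}$ of each other, handled by Minami on boxes of side $\log L$ (Lemma~\ref{Lemma3}, yielding the factor $\eta(\log L)^{d-2/d}$), and (ii) the event that they are far apart, where decorrelation (H0) makes the two restricted operators independent and Wegner (H1) gives a factor $\eta$ (Lemma~\ref{Lemma5}). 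Both parts also need the deterministic estimate~\eqref{eq:10} that a localized eigenfunction approximately solves the eigenvalue equation for the locally restricted operator, and the cheap Minami preliminary~\eqref{eq:11} to rule out $E_1-E_0\leq L^{-\nu}$. Without this near/far dichotomy and the reduction to $\log L$-scale boxes, your argument does not reach the stated threshold on $U$.
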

\noindent The proof of Theorem~\ref{Theorem1} also yields information
on the size of $\eta(L)$ and on the probability estimated
in~\eqref{eq:13}. Note that the assumption (H1)-(H3) can be relaxed at
the expense of changing the admissible size for $U$.\\[2mm]
To appreciate Theorem~\ref{Theorem1} maybe some comments about the
physical background of the Gross-Pitaevskii model, its relationship to
Bose-Einstein condensation and to known results are of interest.
Motivated by recent experiments with weakly interacting Bose gases in
optical lattices (see for example~\cite{BlDaZW}) the fundamental
objects of interest are the ground state density and energy, i.e.
\begin{equation}
  \label{ground state energy} 
  \mathcal{E}^{\rm QM}:=\min_{\substack{\Phi\in
      \underset{s}{\overset{N}{\otimes}} L^2(\Lambda_L)\\\|\Phi\|=1}}
  \left\langle\Phi,\left[\sum_{i=1}^{N}\{-\Delta_i +V(x_i)\} + \sum_{1
        \leq i<j\leq N} v(|x_i-x_j|)\right]\Phi\right\rangle.
\end{equation}
The optical lattice is modeled by the background potential $V$ as
shown in Figure~\ref{fig:1}.
\begin{figure}
  \begin{center}
    \includegraphics[width=8cm]{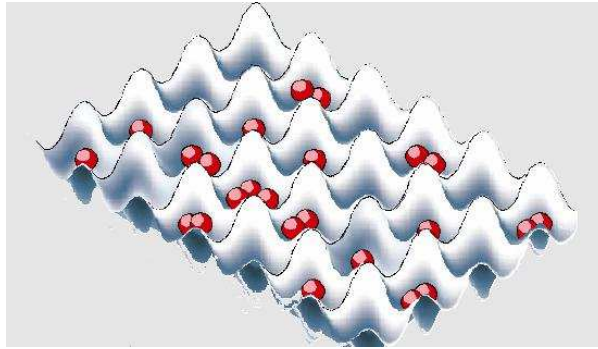}
  \end{center}
  \label{fig:1}
  \caption{An example of a background potential modeling an optical
    lattice \cite{San} }
\end{figure}
\noindent Assuming a weak interaction limit of the interaction
potential $v(x,y)$, the continuous $N$-particle Gross-Pitaevskii
energy functional
\begin{align}
  \label{GP ground state energy}
  \mathcal{E}^{\rm GP}= \min_{\substack{\varphi\in
      L^2(\Lambda_L)\\\|\varphi\|_2=1}} \int_{\Lambda_L}\left(
    N|\nabla\varphi(x)|^2+ N V|\varphi(x)|^2+4 N^2 \pi
    \mu a|\varphi(x)|^4\right)dx,
\end{align}
is a mean field approximation of the ground state energy (\ref{ground
  state energy}), e.g. in three dimensions one has
\begin{align*}
  \lim_{N\to\infty}\frac{{\mathcal{E}^{\rm QM}}}{ {\mathcal{E}^{\rm
        GP}}}=1.
\end{align*}
( see for example~\cite{Lieb-Buch} ) The discrete Gross-Pitaevskii
model is then a tight binding approximation of the continuous
one-particle Gross-Pitaevskii functional~\cite{SmTr1,SmTr2}
\begin{equation*}
  \mathcal{E}^{\rm GP}[\varphi]=\int_{\bR^3}\left(
    |\nabla\varphi(x)|^2+ V|\varphi(x)|^2+4 N \pi \mu
    a|\varphi(x)|^4\right)dx.
\end{equation*}
Another way to derive the discrete Gross-Pitaevskii model starts with
a discretization of (\ref{ground state energy}) yielding the standard
description of optical lattices using the Bose-Hubbard-Hamiltonian
\begin{align*}
  H=-\sum_{|n-n'|=1} c_n^{\dagger}c_{n'} +\sum_n(\sigma
  V_n-\mu)n_n+\frac{1}{2}U\sum n_n^2
\end{align*}
where $c_n^{\dagger}$, $c_{n}$ are bosonic creation and annihilation
operators and $n_n$ gives the particle number at site $n$ (see the
survey article~\cite{BlDaZW} and references therein). A mean field
approximation then yields the discrete Gross-Pitaevskii energy
functional~\cite{Lee}.\\[4mm]
One motivation to study Bose gases is Bose-Einstein condensation, i.e.
the phenomena that a single particle level has a macroscopic
occupation ( a non-zero density in the thermodynamic limit)
\cite{Lieb-Buch}.  Introduced in~\cite{Einstein} in the context of an
ideal Bose gas, it was due to naturally arising interactions a
difficult problem to realize Bose-Einstein condensation experimentally
\cite{Daetal,Ketterle}. \\
As we will see, also the formal description is more
elaborated. To motivate the definition of BEC for vanishing
temperature we
follow the continuous approach in~\cite{Lieb-Buch}.  To formalize the
concept of a macroscopic occupation of a single particle state we
remember the definition of the one-particle density matrix
\cite{Lieb-Buch}, i.e.  the operator on $L^2(\bR^3)$ given by the
kernel
\begin{align*}
  \gamma(x,x')=N\int \Phi^{\rm QM}(x,x_2,\dots,x_N) \Phi^{\rm
    QM}(x',x_2,\dots,x_N) \prod\limits_{j= 2}^N dx_j
\end{align*}
with the normalized ground state wave function $\Phi^{\rm QM}$ of the
many Boson Hamiltonian.  BEC in the ground state is then defined that
the projection operator $\gamma$ has an
eigenvalue of order $N$ in the thermodynamic limit. \\[2mm]
Remembering that for the ideal Bose gas the multi-particle ground
state can be represented as a product
\begin{align*}
  \Phi^{\rm
    QM}(x_{1},\dots,x_{N})=\hbox{$\prod_{i=1}^{N}$}\;\varphi_{0}(x_{i})
\end{align*}
of the single particle ground state $\varphi_{0}$ the one-particle
density matrix becomes
\begin{align*}
  \gamma(x,x')= \; N \:\varphi_{0}(x)\varphi_{0}(x') ,
\end{align*}
thus the definition of BEC above is natural and can also be related to
the thermodynamic formalism (see e.g.~\cite{LePaZa,Lieb-Buch} and
references).  In particular, it is of interest to consider BEC for the
ideal Bose gas with a random background potential. In this case the
Lifshitz tail behavior at the bottom of the spectrum makes a
generalized form of Bose-Einstein condensation possible even for $d =
1$, $2$ (see~\cite{LePaZa} and references cited there).\\[2mm]
The situation in the Gross-Pitaevskii-limit is close to the situation
for the ideal Bose gas~\cite{Lieb-Buch}.  The one-particle density
matrix is asymptotically given by
\begin{align} \label{Condensation in the ground state} \gamma(x, x')
  \stackrel{N\rightarrow \infty}{\sim } \; N \:\varphi^{\rm
    GP}(x)\varphi^{\rm GP}(x') .
\end{align}
Physically the content of (\ref{Condensation in the ground state}) is
that all Bose particles will condensate in the GP ground state
motivating the definition of complete (or 100\%)
BEC in~\cite{Lieb-Buch}.\\[4mm]
The purpose of the present publication is a first step to analyze the
fine structure of the Gross-Pitaevskii ground state. Under the
assumption of a random background potential we want to understand how
$\varphi^{\rm GP}$ is related to the eigenstates of the single
particle Hamiltonian. More familiar is this problem in the following
two   settings. \\[2mm]
If the Bosons are trapped by a potential tending to $\infty$,
i.e. $\liminf_{|x|\rightarrow \infty}V(x) =\infty$, the spectral
properties of the single particle are invariant in the thermodynamic
limit, i.e. the discrete spectrum and the strictly positive distance
between the first two eigenvalues. Assuming $Na \rightarrow 0$ in the
continuous setting, respectively $NU \rightarrow 0$ in the context of
the discrete Gross-Pitaevskii model, the interaction energy is a small
perturbation of the single particle energy functional. In this
situation it is natural, that in the thermodynamic limit $\varphi^{\rm
  GP}$ and the single particle ground state $\varphi_{0}$
coincide~\cite{Lewin}.\\[2mm]
A complementary situation is given if the Bosons are confined to a
cube $\Lambda_L$ with $|\Lambda_L|\rightarrow \infty$ but without a
background potential. As described in~\cite{Lieb-Buch} assuming
$\rho=N/L^3$ and $ g=Na/L$ in the limit $N\to\infty$ one can prove
\begin{align*}
  \lim_{N\to\infty} \frac{1}{N} \frac{1}{L^3} \int\!\!\!\int
  \gamma(x,\, y) dx dy = 1 ,
\end{align*}
i.e. BEC in the normalized single particle ground state
$\varphi_{0}=L^{-d/2}\chi_{\Lambda_L}$.  As explained in
\cite{Lieb-Buch}
\begin{align*}
  g=\frac{Na}{L}=\frac{\rho a}{1/L^2}
\end{align*}
is in this context the natural interaction parameter since `` in the
GP limit the interaction energy per particle is of the same order of
magnitude as the energy gap in the
box, so that the interaction is still clearly visible''.  \\[2mm]
As emphasized in the physics literature (see
e.g.~\cite{BlDaZW,LuClBoAsLeSP}), new phenomena like fragmented BEC
(Lifshitz glasses) should occur when Bosons are trapped in a random
background potential. Our purpose in this publication is more modest.
We want to understand the natural interaction parameter in a random
media, s.t. the Gross-Pitaevskii ground state is close to the ground
state of the single particle Hamiltonian as it is suggested by the
situation in the ideal Bose gas.  As we will see the setting of Bosons
trapped in a random potential is not really comparable to the two
situations described above.\\
Under our assumptions, near $0$ which is almost sure limit of
$\inf(H^P_{\omega,L})$, we are in the localized regime, i.e. one has
pure point spectrum and localized eigenfunctions. In contrast to the
situation with vanishing potential the eigenstates close to the bottom
of the spectrum are localized in a small part of $\Lambda_L$, i.e. the
interaction energy will be larger than in the case of the homogeneous
Bose gas. In the random case, we determine the almost sure behavior of
the ground state from information on the integrated density of states
(see Lemma~\ref{Lemma1}). Under our weak Lifshitz tails assumption
(H3), we obtain that the ground state energy is of size $(\log
L)^{-2/d}$.  When $L\to+\infty$, the difference between the first two
eigenvalues will tend to zero; the speed at which this happens is
crucial in our analysis (see Proposition~\ref{le:2}).  In our case, we
estimate that, with good probability, it must be at least of order
$L^{-d}$. This difference is much smaller than the one obtained in the
homogeneous Bose gas where it typically is of order $L^{-2}$. We deem
that the estimate $L^{-d}$ for the spacing is not optimal in the
present setting. This estimate is the correct one in the bulk of the
spectrum; at the edges, the spacings should be larger. It seems that
getting an optimal estimate requires a much better knowledge of the
integrated density of states or, in other words, much sharper Lifshitz
tails type estimates (see (H3)) and Minami type estimates that take
into account the fact that we work at the edge of the spectrum (see
(H2)). Combining these observations explains the interaction parameter
$U=o(L^{-d}h_d^{-1}(\log L))$ that we don't believe to be optimal.\\[2mm]
Let us now briefly outline the structure of our paper. To prove our
result we need two ingredients. We need an upper bound of the
interaction term, i.e. we have to estimate the $\|.\|_4$- norm of the
single particle ground state $\varphi_0$. At the same time, we need a
lower bound of the distance of the first two single particle
eigenvalues asymptotically almost surely (a.a.s.) i.e. with a
probability tending to $1$ in the thermodynamic limit. Comparing these
two estimates we will see that under the assumptions of Theorem
\ref{Theorem1} it is energetically favorable, that the
Gross-Pitaevskii ground state and the single particle ground state
coincide. This will be proven at the end of this
publication.\\
To estimate the interaction term we will prove in Lemma~\ref{Lemma1}
that almost surely in the thermodynamic limit the single particle
ground state is flat, i.e.
\begin{align*}\| \nabla \varphi_0\|^2 \xrightarrow{L \rightarrow
    \infty} 0 \qquad \text{a.a.s.}
\end{align*}
This then yields an estimate of the interaction term which is the
purpose of Proposition~\ref{Lemma2}.\\
The a.a.s. lower bound of the distance of the first two single
particle eigenvalues is a little bit more intricate and uses the
Wegner and Minami estimates; it is related to the methods developed
in~\cite{GeKl}.  In Lemma~\ref{Lemma3}, we first estimate the
probability that the first two eigenstates and also their localization
center are close together. If the localization centers are relatively
far away, one can decouple the eigenstates and treat the first two
eigenvalues of each other. This is used in Lemma~\ref{Lemma5}.
\section{Estimating the interaction term}
\label{sec:estim-inter-term}
\noindent The main result of this section is an upper bound on
${\rm E}_{\omega,L}^{\rm GP}- {\rm E_0^P[\omega,L]}$. This quantity is
non negative (see~(\ref{ground state})) and we prove
\begin{proposition}
  \label{Lemma2}
  There exists $C>0$, such that, for any $p\in\N$, one has
  \begin{equation}
    \label{eq:4}
    \bP\left[{\rm E}_{\omega,L}^{\rm GP} - {\rm E_0^P[\omega,L]} \leq  C U
      f_d(\log L)\right] \geq 1 -L^{-p}
  \end{equation}
  where $f_d$ is defined in~\eqref{eq:12}.\\[-2mm]
\end{proposition}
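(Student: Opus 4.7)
The strategy is a variational upper bound: use the normalized single particle ground state $\varphi_0$ of $\mathrm{H}^P_{\omega,L}$ as a trial function for the Gross-Pitaevskii functional. Since $\varphi_0\in\mathcal{D}_L$ with $\|\varphi_0\|_2=1$, one immediately gets
\begin{equation*}
\mathrm{E}^{\mathrm{GP}}_{\omega,L}\le\mathcal{E}^{GP}_{\omega,L}[\varphi_0]=\langle \mathrm{H}^P_{\omega,L}\varphi_0,\varphi_0\rangle+U\,\|\varphi_0\|_4^4=\mathrm{E}_0^P[\omega,L]+U\,\|\varphi_0\|_4^4,
\end{equation*}
so the proposition reduces to showing $\|\varphi_0\|_4^4\le C\,f_d(\log L)$ on an event of probability at least $1-L^{-p}$.

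The key quantitative input is Lemma~\ref{Lemma1}, which asserts that with probability at least $1-L^{-p}$ the ground state is flat. Concretely, since $V_\omega\ge 0$ one has $\|\nabla\varphi_0\|_2^2\le\langle \mathrm{H}^P_{\omega,L}\varphi_0,\varphi_0\rangle=\mathrm{E}_0^P[\omega,L]$, and the Lifshitz tail assumption (H3) yields $\mathrm{E}_0^P[\omega,L]\le C(\log L)^{-2/d}$ on the same event. So the real task is to convert this $H^1$ smallness (plus the $L^2$ normalization) into an $L^4$ bound of the form $f_d(\log L)$.

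The conversion splits into three regimes, matching the three cases in the definition of $f_d$. For $d\le 3$, the Gagliardo--Nirenberg inequality $\|\varphi\|_4^4\le C\|\nabla\varphi\|_2^{d}\|\varphi\|_2^{4-d}$ plugged into the gradient estimate immediately produces a bound of order at most $(\log L)^{-1}$, which is stronger than $f_d(\log L)=(\log L)^{-1/4}$. For $d=4$, one is at the endpoint of $H^1\hookrightarrow L^4$, so a logarithmic Sobolev inequality (in the spirit of Br\'ezis--Wainger or Trudinger) is needed, and this is what produces the extra $\log\xi$ factor in $f_4$. For $d\ge 5$, $H^1$ no longer embeds into $L^4$, so I would combine the small $\|\nabla\varphi_0\|_2$ with the deterministic uniform $H^2$ bound obtained from the eigenvalue equation: since $V_\omega$ is uniformly bounded and $\mathrm{H}^P_{\omega,L}\varphi_0=\mathrm{E}_0^P\varphi_0$, one has $\|\Delta\varphi_0\|_2\le C$; interpolating between this $H^2$ bound and the random smallness of $\|\nabla\varphi_0\|_2$ via a Sobolev--Gagliardo--Nirenberg chain then produces the rate $(\log L)^{-1/d}$. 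The discrete setting is treated in parallel, with finite-difference analogues of these inequalities (and $\ell^2\hookrightarrow\ell^4$ trivially holding, so only the flatness-based interpolation is needed).

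The trial-function step is trivial; the content of the proposition lies in the $L^4$ estimate, and the obstacle I expect to be most delicate is the $d\ge 5$ case, where flatness in $H^1$ alone does not control $\|\varphi_0\|_4$ and the argument must borrow regularity from the eigenvalue equation in a way that is uniform on the good event of Lemma~\ref{Lemma1}. The probability tail $1-L^{-p}$ is inherited directly from that lemma, so no additional probabilistic work is required beyond what produces the flatness.
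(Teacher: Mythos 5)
Your proof outline matches the paper's: use $\varphi_0$ as a trial function to reduce the claim to $\|\varphi_0\|_4^4\le C f_d(\log L)$, invoke Lemma~\ref{Lemma1} to get $\|\nabla\varphi_0\|_2^2\le E_0^P[\omega,L]\le C(\log L)^{-2/d}$ on an event of probability $\ge 1-L^{-p}$, and convert the resulting $\dot H^1$-smallness (together with $\|\varphi_0\|_2=1$, and in the continuous case extra regularity bootstrapped from the eigenvalue equation) into an $L^4$ bound. The paper packages the conversion as a single tailored inequality proved by a dyadic Fourier decomposition (Lemma~\ref{le:1} in the discrete case, Lemma~\ref{le:3} in the continuous one), whereas you invoke named inequalities (Gagliardo--Nirenberg for $d\le3$, a Br\'ezis--Wainger-type endpoint estimate for $d=4$, higher-order interpolation for $d\ge5$); the two viewpoints are essentially the same, and your three regimes coincide exactly with the cases of $g_d$.

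There is, however, one step that fails as written: for $d\ge5$ you propose to interpolate $\|\nabla\varphi_0\|_2$ against the single $H^2$ bound $\|\Delta\varphi_0\|_2\le C$. Gagliardo--Nirenberg with three norms forces $\|\varphi\|_4\le C\|\Delta\varphi\|_2^{a}\|\nabla\varphi\|_2^{b}\|\varphi\|_2^{c}$ with $2a+b=d/4$, $a+b+c=1$, $a,b,c\ge0$, hence $b\le 2-d/4$; at $d=8$ this gives $b=0$, so no gain from the gradient smallness, and for $d>8$ the inequality has no admissible exponents since $H^2\not\hookrightarrow L^4$. You need $\|\varphi_0\|_{H^n}$ for $n$ growing with $d$ (the paper uses $n>(d-2)\eta^{-1}+1$), which is obtained precisely by the bootstrap you already describe: differentiating $-\Delta^P_L\varphi_0=(E_0^P-V_\omega)\varphi_0$ and using $\|\partial^\alpha V_\omega\|_{x,\omega,\infty}<\infty$ for all $\alpha$ gives $\|\varphi_0\|_{H^n}\le C_n$ uniformly in $(\omega,L)$. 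Interpolating against that then yields $\|\varphi_0\|_4\le C_\eta\,\varepsilon^{1-\eta}$ for any $\eta\in(0,1)$, which is exactly Lemma~\ref{le:3} and is ample for $\|\varphi_0\|_4^4\le Cf_d(\log L)$. Your parenthetical about the discrete case is correct in substance though misphrased: it is not the trivial $\ell^2\hookrightarrow\ell^4$ that helps (that gives no smallness), but the fact that the discrete Fourier dual $\Z^d/(2L+1)\Z^d$ is bounded, so there is no high-frequency tail and hence no need for any regularity beyond the $\dot H^1$ smallness.
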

\noindent By definition, for $\varphi_0(\omega,L)$ the ground state of
$H^P_{\omega,L}$, one has
\begin{equation}
  \label{eq:5}
  {\rm E}_{\omega,L}^{\rm GP}\leq {\mathcal
    E}_{\omega,L}[\varphi_0(\omega,L)]={\rm E_0^P[\omega,L]}+ U
  \|\varphi_0(\omega,L)\|_4^4
\end{equation}
resp.
\begin{equation*}
  {\rm E}_{\omega,L}^{\rm GP} - {\rm E_0^P[\omega,L]} \leq   U
  \|\varphi_0(\omega,L)\|_4^4.
\end{equation*}
To prove Proposition \ref{Lemma2}, resp. control the interaction term,
we first estimate the ground state energy of the random Schr{\"o}dinger
operator and derive in Corollary \ref{le:4} an estimate on the
``flatness'' of its ground state. We start with the Dirichlet and
Neumann boundary cases.
\begin{lemma}
  \label{Lemma1}
  Assume (H3) is satisfied. Let $E_0^P(\omega,L)$ be the ground state
  energy of $H^P_{\omega,L}$ and denote by $\varphi_0(\omega,L)$ the
  associated positive normalized ground
  state.\\
  Then, for any $p>0$, there is a constant $C>0$ such that, for $L$
  sufficiently large,
  \begin{equation}
    \label{eq:1}
    \bP\left[ C^{-1} (\log L)^{-2/d}\leq E_0^P(\omega,L) \leq C (\log
      L)^{-2/d} \right]\geq 1 -L^{-p}.
  \end{equation}$\;$\\[-4mm]
\end{lemma}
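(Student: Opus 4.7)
The plan is to use Dirichlet--Neumann bracketing to bound $E_0^P(\omega,L)$ from both sides by ground state energies of operators on sub-parallelepipeds of side $\ell\sim(\log L)^{1/d}$, chosen so that $\ell^{-2}\sim(\log L)^{-2/d}$. The key point is that (H3) is stated uniformly over parallelepipeds with side lengths in a factor-$4$ range around the reference scale, so it can be invoked at any scale, in particular at the scale $\ell$.

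For the lower bound I would partition $\Lambda_L$ into $N\sim L^d/\log L$ disjoint sub-parallelepipeds $(P_j)_j$, each with sides comparable to $\ell$. Neumann bracketing combined with the inclusion of form domains between Neumann and periodic boundary conditions gives
\[
E_0^P(\omega,L)\geq E_0^N(\omega,\Lambda_L)\geq \min_j E_0^N(\omega|P_j),
\]
so it suffices to simultaneously lower bound each $E_0^N(\omega|P_j)$ by $\ell^{-2}$. The Neumann half of (H3), applied at scale $\ell$ to each $P_j$, yields $\pro[E_0^N(\omega|P_j)\leq \ell^{-2}]\leq Ce^{-\ell^d/C}$; tuning the implicit constant in $\ell^d\sim C(p+d+1)\log L$ makes the right-hand side $\leq L^{-(p+d+1)}$, so a union bound over the $N$ pieces gives the desired $1-L^{-p}$ estimate. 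This step does not use (H0).

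For the upper bound I would produce, with probability $\geq 1-L^{-p}$, at least one sub-parallelepiped $Q\subset\Lambda_L$ of side $\sim\ell$ for which $H^D_{\omega|Q}$ has an eigenvalue below $\ell^{-2}$. Dirichlet bracketing then gives $E_0^P(\omega,L)\leq E_0^D(\omega|Q)\leq \ell^{-2}$, since the Dirichlet ground state on $Q$, extended by zero, is an admissible test function for $H^P_{\omega,L}$ in both the continuous and discrete settings. To secure such a $Q$, I would pick $N'\sim L^d/\log L$ candidates $(Q_j)_j$ of side $\sim\ell$, pairwise at distance $\geq R$ (the range from (H0)); by (H0) the events $A_j=\{H^D_{\omega|Q_j}\text{ has an eigenvalue in }[0,\ell^{-2}]\}$ are jointly independent, and the Dirichlet half of (H3) gives $\pro[A_j]\geq ce^{-\ell^d/c}$. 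Calibrating $\ell^d\sim c\alpha\log L$ with $\alpha<d$ makes $e^{-\ell^d/c}\gtrsim L^{-\alpha}$, so
\[
\pro\!\left[\bigcap_j A_j^c\right]\leq (1-cL^{-\alpha})^{N'}\leq e^{-c' L^{d-\alpha}/\log L},
\]
which is super-polynomially small and comfortably beats any $L^{-p}$.

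I do not expect a hard conceptual obstacle: this is a standard two-sided Lifshitz-tails computation. The main bookkeeping points are (i) verifying the three bracketing inequalities for periodic/Neumann/Dirichlet in both the continuous and discrete settings (straightforward, but the discrete Laplacian boundary conditions warrant a one-line check), (ii) ensuring the extension-by-zero of a Dirichlet ground state on $Q$ really sits in the form domain of $H^P_{\omega,L}$, and (iii) selecting $\ell$ so that the sides of each $P_j$ and each $Q_j$ lie in the range $[\ell/2,2\ell]$ required by (H3) and so that the $Q_j$'s remain separated by the decorrelation radius $R$ from (H0). Matching the two constants coming from the upper and lower half of the argument then yields the single $C$ in~\eqref{eq:1}.
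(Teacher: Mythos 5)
Your proposal is correct and follows essentially the same route as the paper: partition into parallelepipeds at scale $\ell\sim(\log L)^{1/d}$, Dirichlet--Neumann bracketing, the Neumann half of (H3) plus a union bound for the lower bound on $E_0^P$, and the Dirichlet half of (H3) plus (H0)-independence over well-separated boxes for the upper bound. The only technical variation is in how the bracketing is set up: the paper identifies $E_0^P$ with $\inf\sigma(H_{\omega^{P,L}})$ via the periodic extension of the potential to all of $\R^d$ or $\Z^d$, and then brackets on the full space to obtain $\inf_j E_0^N[\omega,\ell,j]\leq E_0^P\leq\inf_j E_0^D[\omega,\ell,j]$ in one stroke, whereas you establish $E_0^P\geq E_0^N(\Lambda_L)\geq\min_j E_0^N(P_j)$ directly by form-domain inclusion and $E_0^P\leq E_0^D(Q)$ by a zero-extension test function; both are valid, and the remaining probabilistic bookkeeping (union bound on one side, independence and $(1-ce^{-\ell^d/c})^{\#\text{boxes}}$ on the other) is identical.
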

\noindent As $V_\omega$ is non negative and $\varphi_0(\omega,L)$
normalized, one has $\| \nabla\varphi_0(\omega,L)\|^2\leq
E_0^P(\omega,L)$. Hence, Proposition~\ref{Lemma2} implies the following
``flatness'' estimate of the ground state.
\begin{corollary}
  \label{le:4}
  Under the assumptions of Proposition~\ref{Lemma2}, for any $p>0$,
  there is a constant $C>0$ such that, for $L$ sufficiently large,
  \begin{equation}
    \label{eq:2}
    \bP\left[ \| \nabla\varphi_0(\omega,L)\|^2 \leq C (\log L)^{-2/d}
    \right]\geq 1 -L^{-p}.
  \end{equation}$\;$\\[-2mm]
\end{corollary}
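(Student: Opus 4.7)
The approach is essentially immediate once the upper bound from Lemma~\ref{Lemma1} is available; indeed the text preceding the corollary already isolates the key observation. The plan is to decompose the ground state energy via the variational characterization and discard the non negative potential term to relate $\|\nabla\varphi_0(\omega,L)\|^2$ to $E_0^P(\omega,L)$, and then invoke Lemma~\ref{Lemma1}.

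More precisely, first I would write
\begin{equation*}
E_0^P(\omega,L) = \langle H^P_{\omega,L}\varphi_0(\omega,L),\varphi_0(\omega,L)\rangle
= \langle -\Delta^P_{\Lambda_L}\varphi_0(\omega,L),\varphi_0(\omega,L)\rangle
+ \langle V_\omega\varphi_0(\omega,L),\varphi_0(\omega,L)\rangle.
\end{equation*}
Since $\varphi_0(\omega,L)$ lies in the domain and the periodic Laplacian is non positive, the quadratic form identity gives $\langle -\Delta^P_{\Lambda_L}\varphi_0,\varphi_0\rangle = \|\nabla\varphi_0\|^2$ in the continuous case (and its natural discrete analogue $\sum_{|x-y|=1}|\varphi_0(x)-\varphi_0(y)|^2$ in the discrete case, which we still denote $\|\nabla\varphi_0\|^2$ by abuse). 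By assumption $V_\omega\geq 0$, so the potential contribution is non negative, and
\begin{equation*}
\|\nabla\varphi_0(\omega,L)\|^2 \leq E_0^P(\omega,L).
\end{equation*}

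Next I would apply the upper bound in~\eqref{eq:1} from Lemma~\ref{Lemma1}: on an event of probability at least $1-L^{-p}$ one has $E_0^P(\omega,L)\leq C(\log L)^{-2/d}$, hence on the same event $\|\nabla\varphi_0(\omega,L)\|^2 \leq C(\log L)^{-2/d}$, which is~\eqref{eq:2}.

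There is no genuine obstacle here, as the result is a direct corollary: the work is entirely in Lemma~\ref{Lemma1}, whose upper bound is extracted from the Neumann Lifshitz tail estimate in (H3). The only minor point to make explicit is the non negativity of the gradient quadratic form in both the continuous and discrete settings and the fact that periodic boundary conditions preserve the integration by parts identity, so that the decomposition above is valid in both frameworks simultaneously.
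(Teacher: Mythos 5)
Your proof is correct and matches the paper's (one-line) argument exactly: $V_\omega\geq 0$ gives $\|\nabla\varphi_0\|^2\leq E_0^P(\omega,L)$, and then the upper bound in Lemma~\ref{Lemma1} does the rest. You have also correctly identified that the relevant input is Lemma~\ref{Lemma1}, despite the paper's text (and the corollary's hypothesis) pointing to Proposition~\ref{Lemma2}, which appears to be a misprint.
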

\noindent It is maybe interesting to note that from a Lifshitz tail type
estimate (i.e. the annealed estimate), we recover the (approximate)
almost sure behavior of the ground state energy of $H^N_{\omega,L}$
(i.e. the quenched estimate) (see e.g.~\cite{MR2001h:60147}).\\
\noindent We  note that Proposition~\ref{Lemma2} and Corollary
\ref{le:4} also hold if we replace the periodic ground state and
ground state energy by the Neumann or Dirichlet ones.
\begin{proof}[Proof of Lemma~\ref{Lemma1}]
  Fix $\ell\geq1$. Decompose the interval $[-L,L]$ into intervals of
  length comprised between $\ell/2$ and $2\ell$. This yields a
  partition of $\Lambda_L$ in parallelepipeds i.e.
  \begin{equation*}
    \Lambda_L=\bigcup_{1\leq j\leq J}P_j
  \end{equation*}
  such that
  \begin{itemize}
  \item $P_j=I_j^1\times\cdots\times I_j^d$ where the intervals
    $(I^k_j)_{1\leq k\leq d}$ satisfy $\ell/2\leq |I^j_k|\leq 2\ell$
  \item for $j\not=j'$, $P_j\cap P_{j'}=\emptyset$,
  \item $J$, the number of parallelepiped, satisfies
    $2^{-d}(L/\ell)^d\leq J\leq 2^d(L/\ell)^d$.
  \end{itemize}
  In the continuous model, one can take the parallel piped to be cubes.\\
  Denote by $\omega_{|\Lambda_L}$ the restriction of $\omega$ to
  $\Lambda_L$. Furthermore, let $\omega^{P,L}$ be the periodic
  extension of $\omega_{|\Lambda_L}$ to $\Z^d$ i.e. for
  $\beta\in\Lambda_L$ and $\gamma\in\Z^d$, $\omega_{\beta+\gamma
    \overline{L}}^{P,L}=\omega_\beta$ where $\overline{L}=2L+1$ in the
  discrete case and $2L$ in the continuous one. As ${\rm
    H}^P_{\omega}$ is the periodic restriction of $H_\omega$ to
  $\Lambda_L$, we know that ${\rm
    E_0^P[\omega,L]}=\inf\sigma(H_{\omega^{P,L}})$ where this last
  operator is considered as acting on the full space $\R^d$ or
  $\Z^d$ (see e.g.~\cite{MR89b:35127}).\\
  We can now decompose $\R^d$ or $\Z^d$ into
  $\cup_{\gamma\in\Z^d}\cup_{j=1}^J(\gamma \overline{L}+P_j)$. By
  Dirichlet-Neumann bracketing (see e.g.~\cite{Kirsch,KiMe}) $H_{\omega^{P,L}}$
  satisfies as an operator on $\Z^d$ or $\R^d$ 
  \begin{equation}
    \label{eq:14}
    \oplus_{\gamma\in\Z^d}\oplus_{j=1}^J H^N_{\omega|(\gamma
      \overline{L}+P_j)} \leq H_{\omega^{P,L}}\leq
    \oplus_{\gamma\in\Z^d}\oplus_{j=1}^J 
    H^D_{\omega|(\gamma \overline{L}+P_j)}.
  \end{equation}
  Define
  \begin{equation}
    \label{eq:19}
    {\rm E_0^\bullet[\omega,\ell,j]}=\inf \sigma({\rm
      H}^\bullet_{\omega|P_j})\quad\text{for}
    \quad\bullet\in\{N,D\}; 
  \end{equation}
  here, the superscripts $D$ and $N$ refer respectively to the
  Dirichlet and Neumann boundary conditions. As $\omega^{P,L}$ is
  $\overline{L}\Z^d$-periodic, ${\rm H}^\bullet_{\omega|P_j}$ and
  ${\rm H}^\bullet_{\omega|(\gamma\overline{L}+P_j)}$ are unitarily
  equivalent. The bracketing~(\ref{eq:14}) then yields
  \begin{equation*}
    \inf_{1\leq j\leq J} {\rm E_0^N[\omega,\ell,j]}\leq {\rm
      E_0^P[\omega,L]} \leq \inf_{1\leq j\leq J} {\rm E_0^D[\omega,\ell,j]}.
  \end{equation*}
  Labeling every second interval of the partition of $[-L,L]$ used to
  construct the partition of $\Lambda_L$, we can partition the
  interval $\{1,\cdots,J\}$ into $2^d$ sets, say
  $(\mathcal{J}_l)_{1\leq l\leq 2^d}$ such that
  \begin{enumerate}
  \item if $l\not= l'$, $\mathcal{J}_l\cap\mathcal{J}_{l'}=\emptyset$,
  \item for $j\in \mathcal{J}_l$ and $j'\in \mathcal{J}_l$ such that
    $j\not= j'$, one has dist$(P_j,P_{j'})\geq\ell/2$,
  \item there exists $C>0$ such that for $1\leq l\leq 2^d$,
    $C^{-1}(L/\ell)^d\leq \#\mathcal{J}_l\leq C(L/\ell)^d$.
  \end{enumerate}
  Assume $R$ is given by (H0).
  By (2) of the definition of the partition above, for any $l\geq 2R$, all the $({\rm
    H}^\bullet_{\omega|P_j})_{j\in \mathcal{J}_l}$, resp. all the
  $(E_0^\bullet[\omega,\ell,j])_{j\in \mathcal{J}_l}$ (for
  $\bullet\in\{N,D\}$) are independent. Hence, using~(\ref{eq:19}), we
  compute
  \begin{equation*}
    \begin{split}
      \bP[ {\rm E_0^P[\omega,L]}> E]&\leq \bP[ \inf_j {\rm
        E_0^D[\omega,\ell,j]}> E]\leq
      \sum_{l=1}^{2^d}\prod_{j\in\mathcal{J}_l}\bP[ {\rm
        E_0^D[\omega,\ell,j]}> E]\\
      &= \sum_{l=1}^{2^d}\prod_{j\in\mathcal{J}_l}\left( 1-\bP[ {\rm
          E_0^D[\omega,\ell,j]}\leq E]\right).
    \end{split}
  \end{equation*}
  Pick $E=c\ell^{-2}$ where $c$ is given by assumption (H3) and
  \begin{equation*}
    (k\log L-c^{-1}\log c)^{1/d}\leq\ell\leq
    (k\log L-c^{-1}\log c)^{1/d}+1
  \end{equation*}
  where $k$ will be chosen below. Applying the Lifshitz estimate (H3),
  we obtain
  \begin{equation*}
    \begin{split}
      \bP[ {\rm E_0^P[\omega,L]}> E]&\leq \sum_{l=1}^{2^d}\left(
        1- e^{-k \log L/c}\right)^{\#\mathcal{J}_l} \\&\leq
      \sum_{l=1}^{2^d}\exp\left(-\#\mathcal{J}_l\,e^{-k \log
          L/c}\right)\leq O(L^{-\infty})
    \end{split}
  \end{equation*}
  if we choose $k<c d$ as $C^{-1}(L/\ell)^d\leq \#\mathcal{J}_l\leq
  C(L/\ell)^d$ for $1\leq l\leq 2^d$.\\
  Hence, we have
  \begin{equation*}
      \bP[ {\rm E_0^N[\omega,L]}\leq E]\geq 1-O(L^{-\infty}).
  \end{equation*}
  To estimate from below, we use again~(\ref{eq:19}) to get
  \begin{equation*}
    \bP[ {\rm E_0^P[\omega,L]}\leq E] \leq 
    \sum_{j\in\mathcal{J}}
    \bP[ {\rm E_0^N[\omega,\ell,j]}\leq E]).
  \end{equation*}
  Pick $E=C\ell^{-2}$ where $C$ is given by assumption (H3) and
  \begin{equation*}
    (k\log L-C^{-1}\log C)^{1/d}\leq\ell\leq
    (k\log L-C^{-1}\log C)^{1/d}+1
  \end{equation*}
  where $k$ will be chosen below. As $\#\mathcal{J}\leq C(L/\ell)^d$,
  applying the Lifshitz estimate (H3), we obtain
  \begin{equation*}
    \bP[ {\rm E_0^N[\omega,L]}\leq E]\leq
    C\left(\frac{L}{\ell}\right)^d e^{-k \log
      L/C}\leq L^{-p}
  \end{equation*}
  if we choose $k>(d+p)C$. Hence, we have
  \begin{equation*}
    \bP[ {\rm E_0^N[\omega,L]}\geq E]\geq 1-L^{-p}.
  \end{equation*}
  This completes the proof of Lemma~\ref{Lemma1}.
\end{proof}
\noindent To prove estimate~(\ref{eq:4}), we will use the spectral
decomposition of $-\Delta_L^P$. Though the arguments in the discrete
and continuous cases are quite similar, it simplifies the discussion
to distinguish between the discrete and the continuous case rather
than to introduce uniform notations. We start with the discrete case.
\begin{lemma}
  \label{le:1}
  There exists $C>0$ such that, for $\varepsilon\in(0,1)$ and $L\in\N$
  satisfying $L\cdot\varepsilon\geq1$ one has for
  $u\in\ell^2(\Z^d/(2L+1)\Z^d)$ with $\|u\|_2=1$ and $\langle
  -\Delta_L^Pu,u\rangle\leq\varepsilon^2$ the estimate
  \begin{equation}
    \label{eq:6}
    \|u\|_4\leq C g_d(\varepsilon)\text{ where }
    g_d(\xi)=
    \begin{cases}
      \xi^{d/4}&\text{ if }d\leq 3,\\ \xi|\log\xi|&\text{ if }d=4,\\
      \xi&\text{ if }d\geq5.
    \end{cases}
  \end{equation}
\end{lemma}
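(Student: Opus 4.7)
The plan is to diagonalize the periodic discrete Laplacian by Fourier analysis on $\Z^d/N\Z^d$ with $N=2L+1$. Expanding $u=\sum_k c_k e_k$ in the orthonormal Fourier basis $e_k(x)=N^{-d/2}e^{2\pi i k\cdot x/N}$, the eigenvalues $\lambda_k=2\sum_{j=1}^d(1-\cos(2\pi k_j/N))$ are comparable to $|k|^2/L^2$ for representatives $k\in[-L,L]^d$, and the hypotheses translate to $\|c\|_2=1$ and $\sum_k\lambda_k|c_k|^2\le\varepsilon^2$.

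Next I would reduce the $\ell^4$-norm on the physical side to an $\ell^{4/3}$-norm on the Fourier side. A direct computation identifies the Fourier coefficients of $|u|^2=u\bar u$ in the basis $\{e_m\}$ as $N^{-d/2}a_m$ with $a_m=\sum_k c_k\bar c_{k-m}=(c*\tilde c)(m)$ and $\tilde c_k=\bar c_{-k}$. Parseval gives $\|u\|_4^4=N^{-d}\|a\|_2^2$, and Young's convolution inequality on $\Z^d/N\Z^d$ (with the exponent identity $\tfrac34+\tfrac34=1+\tfrac12$) yields $\|a\|_2\le\|c\|_{4/3}^2$. Hence
\[
\|u\|_4\le CL^{-d/4}\|c\|_{4/3},
\]
so it suffices to bound $\|c\|_{4/3}$.

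To do this I would split $c=c_{\mathrm{lo}}+c_{\mathrm{hi}}$ at a threshold $\mu\in(1/L,1]$ by whether $\lambda_k\le\mu^2$ or not, and later choose $\mu=\varepsilon$ (legitimate since $L\varepsilon\ge1$ and $\varepsilon\in(0,1)$). A plain H\"older with exponents $(3/2,3)$, using $\|c_{\mathrm{lo}}\|_2\le1$ and $\#\{\lambda_k\le\mu^2\}\asymp(L\mu)^d$, bounds $\|c_{\mathrm{lo}}\|_{4/3}\le C(L\mu)^{d/4}$. For the high part, the weighted H\"older
\[
\sum_{\lambda_k>\mu^2}|c_k|^{4/3}=\sum_{\lambda_k>\mu^2}(\lambda_k|c_k|^2)^{2/3}\lambda_k^{-2/3}\le\varepsilon^{4/3}G_d(\mu)^{1/3}
\]
reduces matters to the Green's-function-type sum $G_d(\mu):=\sum_{\lambda_k>\mu^2}\lambda_k^{-2}\asymp L^4\sum_{L\mu<|k|\le L}|k|^{-4}$. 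An elementary computation gives $G_d(\mu)\asymp L^d\mu^{d-4}$ if $d<4$, $\asymp L^d\log(1/\mu)$ if $d=4$, and $\asymp L^d$ if $d>4$, reflecting the familiar Green's function integrability threshold.

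Putting these bounds together and inserting $\mu=\varepsilon$, the combined estimate $\|u\|_4\le CL^{-d/4}(\|c_{\mathrm{lo}}\|_{4/3}+\|c_{\mathrm{hi}}\|_{4/3})$ becomes $C\varepsilon^{d/4}$ for $d\le3$, $C\varepsilon|\log\varepsilon|^{1/4}\le C\varepsilon|\log\varepsilon|$ for $d=4$, and $C(\varepsilon^{d/4}+\varepsilon)\le C\varepsilon$ for $d\ge5$, which is precisely $g_d(\varepsilon)$. The only genuinely non-routine point I anticipate is the dimensional crossover at $d=4$ in $G_d(\mu)$, responsible for the logarithmic loss in the critical dimension; the rest is standard interpolation bookkeeping.
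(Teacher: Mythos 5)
Your proposal is correct, but it takes a genuinely different route from the paper's. The paper decomposes $u$ into exponential Fourier shells $u_k$ supported on $e^{k-1}\varepsilon L\leq|\gamma|<e^k\varepsilon L$, extracts $\sum_k e^{2k}\|u_k\|_2^2\leq C$ from the energy bound, and estimates $\|u_k\|_4\leq\sqrt{\|u_k\|_2\|u_k\|_\infty}$ with the Bernstein-type bound $\|u_k\|_\infty\leq(e^k\varepsilon)^{d/2}\|u_k\|_2$; summing the geometric series $e^{k(d-4)/4}\varepsilon^{d/4}$ over $0\leq k\lesssim|\log\varepsilon|$ yields $g_d(\varepsilon)$, the logarithm at $d=4$ coming from the number of shells. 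You instead pass to Fourier $\ell^{4/3}$ via Young's convolution inequality (getting $\|u\|_4\leq CL^{-d/4}\|c\|_{4/3}$), split once at the threshold $\mu=\varepsilon$, and absorb the high frequencies by a weighted H\"older against the Green's-function sum $G_d(\mu)=\sum_{\lambda_k>\mu^2}\lambda_k^{-2}\asymp L^4\sum_{L\mu<|k|\lesssim L}|k|^{-4}$, whose $d=4$ integrability threshold produces the same dimensional crossover. Both arguments are sound; yours is more structural, makes the Green's-function origin of the crossover explicit, and actually gives the slightly sharper $\varepsilon|\log\varepsilon|^{1/4}$ at $d=4$ (well within the stated $\varepsilon|\log\varepsilon|$), while the paper's shell decomposition is more elementary (no Young/Hausdorff--Young) and is reused almost verbatim in the continuous analogue, where the high-frequency tail is instead tamed against an $H^n$-norm.
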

\begin{proof}
  The spectral decomposition of $-\Delta_L^P$ is given by the discrete
  Fourier transform that we recall now. Identify $\Lambda_L$ with the
  Abelian group $\Z^d/(2L+1)\Z^d$. For $u\in\mathcal{H}_L$, set
  \begin{equation}
    \label{eq:9}
    \hat u=(\hat u_\gamma)_{|\gamma|\leq L}\text{ where }
    \hat u_\gamma=\frac{1}{(2L+1)^{d/2}}\sum_{|\beta|\leq
      L}u_\beta\cdot e^{-2i\pi\gamma\beta/(2L+1)}.
  \end{equation}
  Then, one checks that (see e.g.~\cite{Klopp1})
  \begin{equation}
    \label{eq:7}
    (-\Delta_L^Pu)\hat\ =(h(\gamma)\hat u_\gamma)_{|\gamma|\leq
      L}\text{ where }h(\gamma)=2d-2\sum_{j=1}^d
    \cos\left(\frac{2\pi\,\gamma_j}{2L+1}\right).
  \end{equation}
  Pick $u\in\ell^2(\Z^d/(2L+1)\Z^d)$ with $\|u\|_2=1$ and $\langle
  -\Delta_L^Pu,u\rangle\leq\varepsilon^2$ and write
  $u=\sum_{k=0}^{k_\varepsilon}u_k$ where $k_\varepsilon\in\N$,
  $-\log\varepsilon\leq k_\varepsilon<-\log\varepsilon+1$ and
  \begin{itemize}
  \item $\hat u_0=\hat u\cdot \car_{|\gamma|< \varepsilon L}$
  \item for $1\leq k\leq k_\varepsilon-1$, $\hat u_k=\hat u\cdot
    \car_{e^{k-1}\varepsilon L \leq |\gamma|< e^k\varepsilon L}$
  \item $\hat u_{k_\varepsilon}=\hat u\cdot
    \car_{e^{k_\varepsilon}\varepsilon L\leq |\gamma|}$
  \end{itemize}
  where $\hat u$ denotes the discrete Fourier transform defined
  defined in~\eqref{eq:9}.\\
  Then, for $k\not=k'$, $\langle u_k,u_{k'}\rangle=0$ and,
  using~(\ref{eq:7}), for $k\geq1$,
  \begin{equation*}
    C^{-1}\sum_{k=0}^{k_\varepsilon}(e^{k-1}\varepsilon)^2\|u_k\|_2^2
    \leq\sum_{k=0}^{k_\varepsilon}\langle-\Delta_L^P
    u_k,u_k\rangle=\langle-\Delta_L^P u,u\rangle\leq\varepsilon^2
  \end{equation*}
  i.e.
  \begin{equation}
    \label{eq:8}
    \sum_{k=0}^{k_\varepsilon}e^{2k}\|u_k\|^2_2\leq C.
  \end{equation}
  Hence, using~(\ref{eq:9}) and H{\"o}lder's inequality, we compute
  \begin{equation*}
    \begin{split}
      |(u_k)_\beta|&=\frac1{(2L+1)^{d/2}}\left|\sum_{e^{k-1}\varepsilon
          L \leq |\gamma|< e^k\varepsilon L} (\hat u_k)_\gamma
        e^{-2i\pi\gamma\beta/(2L+1)} \right|\\
      &\leq \frac1{(2L+1)^{d/2}}\left(\sum_{e^{k-1}\varepsilon L \leq
          |\gamma|< e^k\varepsilon L} |(\hat
        u_k)_\gamma|^p\right)^{1/p} \left(\sum_{e^{k-1}\varepsilon
          L \leq |\gamma|< e^k\varepsilon L}1\right)^{1/q}\\
      &\leq \frac1{(2L+1)^{d/2}}\|\hat
      u_k\|_p\,(e^k\varepsilon(2L+1))^{d/q}.
    \end{split}
  \end{equation*}
  So, for $p=q=2$, one gets
  \begin{equation}
    \label{eq:15}
    \|u_k\|_\infty\leq\|u_k\|_2\,(e^k\varepsilon)^{d/2}.
  \end{equation}
  Then, using~(\ref{eq:8}), we compute
  \begin{equation*}
    \|u\|_4\leq\sum_{k=0}^{k_\varepsilon}\|u_k\|_4
    \leq\sum_{k=0}^{k_\varepsilon}\sqrt{\|u_k\|_2\|u_k\|_\infty}
    \leq C\sum_{k=0}^{k_\varepsilon}e^{k(d-4)/4}
    \varepsilon^{d/4}\leq C g_d(\varepsilon)
  \end{equation*}
  where $g_d$ is defined in~(\ref{eq:6}). This completes the proof of
  Lemma~\ref{le:1}.
\end{proof}
\begin{remark}
  \label{rem:1}
  Lemma~\ref{le:1} is essentially optimal as, for $L$ sufficiently
  large,
  \begin{itemize}
  \item the trial function
    \begin{equation*}
      u_\gamma=
      \begin{cases}
        \varepsilon\text{ if }\gamma=0,\\
        (2L+1)^{-d/2}\text{ if }\gamma\not=0,
      \end{cases}
    \end{equation*}
    satisfies $1\leq \| u\|_2\leq 1+\varepsilon$, $\langle-\Delta_L^P
    u,u\rangle\leq C\varepsilon^2$ and $\|u\|_4\geq \varepsilon/C$;
  \item the trial function
    \begin{equation*}
      \hat u_\gamma=
      \begin{cases}
        (2\varepsilon L+1)^{-d/2}\text{ if }|\gamma|\leq\varepsilon
        L,\\ 0\text{ if }|\gamma|>\varepsilon L,
      \end{cases}
    \end{equation*}
    satisfies $\| u\|_2=1$, $\langle-\Delta_L^P u,u\rangle\leq
    C\varepsilon^2$ and $\|u\|_4\geq \varepsilon^{d/4}/C$.
  \end{itemize}
\end{remark}
\noindent We now turn to the continuous case.
\begin{lemma}
  \label{le:3}
  Fix $\eta\in(0,1/4)$. There exists $C>0$ such that, for
  $\varepsilon\in(0,1)$, $n>(d-2)\eta^{-1}+1$ and $L\in\N$ satisfying
  $L\cdot\varepsilon\geq1$ one has for $u\in H^n(\R^d/(2L)\Z^d)$ with
  $\langle -\Delta_L^Pu,u\rangle\leq\varepsilon^2$ the norm estimate
  \begin{equation*}
    \|u\|_4\leq Cg_{d,n}(\varepsilon)\|u\|^{\eta}_{H^n}\text{ where }
    g_{d,\eta}(\xi)=
    \begin{cases}
      \xi^{d/4}&\text{ if }d\leq 3,\\ \xi^{1-\eta}|\log\xi|&\text{ if }d=4,\\
      \xi^{1-\eta}&\text{ if }d\geq5.
    \end{cases}
  \end{equation*}
\end{lemma}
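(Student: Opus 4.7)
I would mirror the proof of Lemma~\ref{le:1}, replacing the discrete Fourier transform by the Fourier series expansion on the continuous torus $\R^d/(2L\Z)^d$. Expanding $u(x)=(2L)^{-d/2}\sum_{\gamma\in\Z^d}\hat u_\gamma\,e^{i\pi\gamma\cdot x/L}$, the operator $-\Delta_L^P$ is diagonal with eigenvalues $(\pi|\gamma|/L)^2$, so the hypothesis reads $\sum_\gamma(|\gamma|/L)^2|\hat u_\gamma|^2\leq C\varepsilon^2$, while the $H^n$-regularity reads $\sum_\gamma(|\gamma|/L)^{2n}|\hat u_\gamma|^2\leq C\|u\|_{H^n}^2$. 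The essential new feature compared with Lemma~\ref{le:1} is that $\gamma$ ranges over all of $\Z^d$, so a genuine high-frequency tail must now be controlled via the $H^n$-norm rather than by the trivial discrete cutoff at $|\gamma|\leq L$.

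Using the same dyadic decomposition as in the proof of Lemma~\ref{le:1} (let $\hat u_0=\hat u\cdot\car_{|\gamma|<\varepsilon L}$ and $\hat u_k=\hat u\cdot\car_{e^{k-1}\varepsilon L\leq|\gamma|<e^k\varepsilon L}$), but now with $k$ ranging over all of $\N$, orthogonality yields \emph{two} per-block $L^2$-bounds: $\|u_k\|_2\leq Ce^{-k}$ from the kinetic energy, and $\|u_k\|_2\leq C(e^k\varepsilon)^{-n}\|u\|_{H^n}$ from the $H^n$-norm. The same H\"older count as in~\eqref{eq:15}, using that the Fourier annulus for block $k$ contains at most $C(e^k\varepsilon L)^d$ lattice points, produces $\|u_k\|_\infty\leq C\|u_k\|_2(e^k\varepsilon)^{d/2}$ and hence $\|u_k\|_4\leq C\|u_k\|_2(e^k\varepsilon)^{d/4}$. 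The natural next step is to split $\|u\|_4\leq\sum_{k\leq K}\|u_k\|_4+\sum_{k>K}\|u_k\|_4$ at a threshold $K$ to be chosen. On the low part one uses the kinetic bound, obtaining $\sum_{k\leq K}\|u_k\|_4\leq C\varepsilon^{d/4}\sum_{k\leq K}e^{k(d-4)/4}$; on the high part one uses the $H^n$-bound, obtaining the convergent sum $\sum_{k>K}\|u_k\|_4\leq C\|u\|_{H^n}\varepsilon^{d/4-n}e^{-K(n-d/4)}$ (the lower bound on $n$ ensures in particular $n>d/4$, so convergence holds).

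For $d\leq3$ the low sum is geometric with decreasing ratio and one may send $K\to\infty$ to obtain directly $\|u\|_4\leq C\varepsilon^{d/4}$, which (using the implicit normalization $\|u\|_2=1$, so $\|u\|_{H^n}\geq1$) is stronger than the claimed bound. For $d\geq5$, balancing the two contributions by choosing $K$ of size $(n-1)^{-1}\log(\varepsilon^{-n}\|u\|_{H^n})$ yields $\|u\|_4\leq C\,\varepsilon^{(4n-d)/(4(n-1))}\|u\|_{H^n}^{(d-4)/(4(n-1))}$, and the hypothesis $n>(d-2)\eta^{-1}+1$ implies both $(4n-d)/(4(n-1))\geq1-\eta$ and $(d-4)/(4(n-1))\leq\eta$; using $\|u\|_{H^n}\geq1$ and $\varepsilon\leq1$, this is reread as the claimed $C\varepsilon^{1-\eta}\|u\|_{H^n}^\eta$. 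The $d=4$ case is analogous with $K\sim|\log\varepsilon|$, the logarithm in $g_{4,\eta}$ coming from the arithmetic low sum $\sum_{k\leq K}1\sim K$. The main obstacle is really the careful exponent-bookkeeping in this balancing step, and in particular checking that the (rather generous) regularity hypothesis is enough to simultaneously absorb the $\varepsilon$- and $\|u\|_{H^n}$-exponents in every dimension $d\geq5$, while still leaving the logarithmic margin at $d=4$.
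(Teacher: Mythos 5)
Your proposal is correct, and it follows the paper's overall strategy (Fourier-series analogue of Lemma~\ref{le:1}, dyadic decomposition in frequency, per-block $L^4$-bound via the $L^2$--$L^\infty$ interpolation and a lattice-point count); the exponent-bookkeeping for the balancing step checks out, and the sufficiency of the regularity hypothesis $n>(d-2)\eta^{-1}+1$ is correctly verified (the two inequalities $(4n-d)/(4(n-1))\geq1-\eta$ and $(d-4)/(4(n-1))\leq\eta$ are in fact equivalent, since the two exponents sum to $1$).

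Where you genuinely diverge from the paper is in the handling of the high-frequency tail. The paper keeps the decomposition truncated at $k_\varepsilon\sim-\log\varepsilon$ exactly as in Lemma~\ref{le:1} and controls the single residual tail block $u_{k_\varepsilon}$ in $L^\infty$ through a nested Cauchy--Schwarz/H\"older argument that simultaneously spends some $H^n$-regularity and some kinetic energy, arriving directly at $\|u_{k_\varepsilon}\|_\infty\leq C\|u\|_{H^n}^\eta\varepsilon^{1-\eta}$. You instead extend the dyadic decomposition to all frequencies, extract the two per-block $L^2$-bounds (one from the quadratic form, one from the Sobolev norm), and optimize the crossover index $K$. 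The two mechanisms are equivalent in spirit --- both are interpolations between the kinetic energy and the $H^n$-norm in frequency space --- but your version is more elementary (no nested H\"older), avoids the opaque exponent choices in the paper's second H\"older step, and makes visible that for $d\leq3$ the kinetic energy alone suffices, with no Sobolev input at all. One small caveat to flag explicitly in a write-up: the statement is only scale-correct under the normalization $\|u\|_2=1$, which is what lets you write $\|u\|_{H^n}\geq1$ and absorb the unfavorable exponents; this is implicit in the paper (the lemma is applied to the normalized ground state) and you use it correctly, but it should be stated.
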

\begin{proof}
  We now use the Fourier series transform to decompose
  $-\Delta_L^P$. Identify $\Lambda_L$ with the Abelian group
  $\R^d/2L\Z^d$. For $u\in\mathcal{H}_L$, set
  \begin{equation}
    \label{eq:16}
    \hat u=(\hat u_\gamma)_{\gamma\in\Z^d}\text{ where }
    \hat u_\gamma=\frac{1}{(2L)^{d/2}}\int_{\Lambda_L}
    u(\theta)\cdot e^{-\pi i\gamma\theta/L}d\theta.
  \end{equation}
  Then,
  \begin{equation}
    \label{eq:17}
    u(\theta)=\frac{1}{(2L)^{d/2}}\sum_{\gamma\in\Z^d}\hat u_\gamma
    e^{\pi i\gamma\theta/L}
  \end{equation}
  and
  \begin{equation}
    \label{eq:18}
    (-\Delta_L^Pu)\hat\ =\left(\left|\frac{\pi\gamma}L\right|^2\hat 
      u_\gamma\right)_{|\gamma|\leq L}\text{ if }u\in\mathcal{D}_L.
  \end{equation}
  Pick $u$ as in Lemma~\ref{le:3} and decompose it as in the proof of
  Lemma~\ref{le:1} i.e. write $u=\sum_{k=0}^{k_\varepsilon}u_k$ where
  $k_\varepsilon\in\N$, $-\log\varepsilon\leq
  k_\varepsilon<-\log\varepsilon+1$ and
  \begin{itemize}
  \item $\hat u_0=\hat u\cdot \car_{|\gamma|< \varepsilon L}$
  \item for $1\leq k\leq k_\varepsilon-1$, $\hat u_k=\hat u\cdot
    \car_{e^{k-1}\varepsilon L \leq |\gamma|< e^k\varepsilon L}$
  \item $\hat u_{k_\varepsilon}=\hat u\cdot
    \car_{e^{k_\varepsilon}\varepsilon L\leq |\gamma|}$
  \end{itemize}
  where $\hat u$ denotes the Fourier series transform defined
  in~\eqref{eq:16} and~\eqref{eq:17}.\\
  The control on $u_k$ for $0\leq k\leq k_\varepsilon-1$ is obtained
  in the same way as in the proof of Lemma~\ref{le:1} namely the
  estimate~\eqref{eq:15} holds for $0\leq k\leq k_\varepsilon-1$. The
  additional ingredient that we need is to obtain a control over the
  large frequency components.\\
  Recall that
  \begin{equation*}
    \|u\|^2_{H^n}=\sum_{\gamma\in\Z^d}
    \left(1+\left|\frac{\pi\gamma}{L}\right|^2\right)^{n/2}|\hat u_\gamma|^2
  \end{equation*}
  Fix $r>d$. For notational convenience, write $v=u_{k_\varepsilon}$
  and compute
  \begin{equation*}
    \begin{split}
      |v(\theta)|&=
      \frac1{(2L)^{d/2}}\left|\sum_{e^{k_\varepsilon}\varepsilon L
          \leq |\gamma|} (\hat v)_\gamma
        e^{-i\pi\gamma\theta/L} \right|\\
      &\leq\frac1{(2L)^{d/2}}\sum_{e^{k_\varepsilon}\varepsilon L \leq
        |\gamma|}
      \left[\left|\frac{\pi\gamma}{L}\right|^{r/2}\left|(\hat
          v)_\gamma\right|\right] \left|\frac{\pi\gamma}{L}\right|^{-r/2}\\
      &\leq \frac1{(2L)^{d/2}}\left(\sum_{e^{k-1}\varepsilon L \leq
          |\gamma|< e^k\varepsilon L}
        \left|\frac{\pi\gamma}{L}\right|^{r} |(\hat
        v)_\gamma|^2\right)^{1/2} \left(\sum_{e^{k-1}\varepsilon L
          \leq |\gamma|}\left|\frac{\pi\gamma}{L}\right|^{-r}
      \right)^{1/2}\\
      &\leq C \left(\sum_{e^{k-1}\varepsilon L \leq |\gamma|<
          e^k\varepsilon L} \left|\frac{\pi\gamma}{L}\right|^{r-2/q}
        |(\hat v)_\gamma|^{2/p}\cdot
        \left|\frac{\pi\gamma}{L}\right|^{2/q}
        |(\hat v)_\gamma|^{2/q}\right)^{1/2}\\
      &\leq C \|v\|^{1/p}_{H^{rp-2p/q}}\cdot\langle
      -\Delta^P_L v,v\rangle^{1/(2q)}\\
      &\leq C \|v\|^{\eta}_{H^n}\cdot\epsilon^{1-\eta}
    \end{split}
  \end{equation*}
  if $p=\eta$, $q=1-\eta$ and $r=(n-1)\eta+2>d$ as
  $n>(d-2)\eta^{-1}+1$. One then completes the proof of
  Lemma~\ref{le:3} in the same way as that of Lemma~\ref{le:1}.
\end{proof}
\begin{proof}[Proof of Proposition~\ref{Lemma2}]
  \noindent %
  \noindent In the discrete case Proposition~\ref{Lemma2} is a
  consequence of Lemma~\ref{Lemma1} and Lemma~\ref{le:1} with
  $\varepsilon=C (\log L)^{-1/d}$.\\[2mm]
  To be able to apply Lemma~\ref{le:3} to $\varphi_0(\omega,L)$ in the
  continuous case, we need to show that, for any $n>d$,
  $\varphi_0(\omega,L)\in H^n$ with a bounded depending only on $n$
  not on $L$ or $\omega$. Therefore we use the first assumption on the random
  field $V_\omega$ i.e. that, for any $\alpha\in\N^d$,
  $\|\partial^\alpha V_\omega\|_{\omega,x,\infty}<+\infty$. Hence, as
  $\varphi_0(\omega,L)$ is an eigenvector of $-\Delta^P_L+V_\omega$,
  using the eigenvalue equation
  \begin{equation*}
    -\Delta_L^P\varphi_0(\omega,L)=(E_0^P[\omega,L]-V_\omega)\varphi_0(\omega,L) 
  \end{equation*}
  inductively, we see that
  \begin{equation*}
    \|\|\varphi_0(\omega,L)\|_{H^n}\|_{\omega,\infty}<+\infty. 
  \end{equation*}
  Proposition~\ref{Lemma2} in the continuous case is then a
  consequence of Lemma~\ref{Lemma1} and Lemma~\ref{le:3} with
  $\varepsilon=C (\log L)^{-1/d}$. This completes the proof of
  Proposition~\ref{Lemma2}.
\end{proof}
\section{The spectral gap of the random Hamiltonian}
\label{sec:dist-betw-first}
\noindent The main result of the present section is
\begin{proposition}
  \label{le:2}
  Let the first two eigenvalues of ${\rm H}^P_{\omega,L}$ be denoted
  by ${\rm E_0^P[\omega,L]}<{\rm E_1^P[\omega,L]}$. Then, for $p>0$, there
  exists $C>0$ such that, for $L$ sufficiently large and
  $\eta\in(0,1)$, one has
  \begin{equation*}
    \bP\left[ {\rm E_1^P[\omega,L]}-{\rm E_0^P[\omega,L]} \leq \eta
      L^{-d}\right]\leq C\eta\left[1+(\log L)^{d-2/d+\epsilon}\right]+L^{-p} 
  \end{equation*}
  with $\epsilon=0$ in the discrete setting resp.  $\epsilon>0$
  arbitrary in the continuous case.
\end{proposition}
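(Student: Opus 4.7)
The plan is, following the methodology of \cite{GeKl}, to decompose the event $\{E_1^P[\omega,L] - E_0^P[\omega,L] \le \eta L^{-d}\}$ according to the spatial localization of the two lowest eigenstates of $H^P_{\omega,L}$. A naive direct use of the Minami estimate~(H2) on $H^P_{\omega,L}$, combined with the discretization of $[0, C(\log L)^{-2/d}]$ provided by Lemma~\ref{Lemma1}, yields only the useless bound $C\eta L^d (\log L)^{-2/d}$; the key improvement is to apply Minami on \emph{small} polylogarithmic boxes rather than on $\Lambda_L$, using approximate localization of the ground state and first excited state at the bottom of the spectrum.

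The setup is as follows. Fix a polylogarithmic scale $\ell$ (to be chosen in the argument), partition $\Lambda_L$ into cubes $(P_j)_{j=1}^M$ of side $\ell$ (so $M \sim (L/\ell)^d$), and let $\widetilde P_j$ denote enlargements of side $\ell + R$ absorbing the decorrelation range $R$ from~(H0). Set $E_* = C(\log L)^{-2/d}$; by Lemma~\ref{Lemma1}, $\bP[E_0^P[\omega,L] \le E_*] \ge 1 - L^{-p}$. The first key step (the analogue of the forthcoming Lemma~3) is a localization statement: using (H1), (H2) and a Combes--Thomas resolvent bound, show that with probability $\ge 1-L^{-p}$ each eigenfunction of $H^P_{\omega,L}$ with eigenvalue below $E_* + \eta L^{-d}$ is essentially supported in a single $\widetilde P_j$. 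This reduces the analysis of the two lowest eigenvalues of $H^P_{\omega,L}$ to that of the eigenvalues of the local restrictions $H^\bullet_{\omega|\widetilde P_j}$, up to errors much smaller than $\eta L^{-d}$.

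On this good event, split $\{E_1^P - E_0^P \le \eta L^{-d}\}$ into \emph{Case~A}, where both low eigenstates localize in a common $\widetilde P_j$, and \emph{Case~B}, where they localize in two boxes $\widetilde P_j, \widetilde P_{j'}$ with $\mathrm{dist}(\widetilde P_j, \widetilde P_{j'}) \ge R$. For Case~A, apply Minami on $\widetilde P_j$: $\bP[N_{H^\bullet_{\omega|\widetilde P_j}}(I) \ge 2] \le C(|I|\ell^d)^2$ for $|I| = 2\eta L^{-d}$, then union-bound over the $O(\eta^{-1} L^d E_*)$ such intervals covering $[0, E_*]$ and over the $M$ boxes, giving $\bP(\text{Case~A}) \le C\eta E_* \ell^d$, which is subdominant. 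For Case~B, by (H0) the local ground states $\widetilde E_j := \inf\sigma(H^\bullet_{\omega|\widetilde P_j})$ of decoupled pairs are independent, and Wegner gives $\bP[\widetilde E_{j'} \in [\widetilde E_j, \widetilde E_j + \eta L^{-d}]] \le C\eta L^{-d} \ell^d$ for each $j'$. Summing over $j' \ne j$ produces $C\eta$, and summing over $j$ weighted by the (Lifshitz) upper bound on $\bP[\widetilde E_j \le E_*]$ obtained by iterating (H3) from the scale $\ell$ down to the Lifshitz scale $\ell_0 = E_*^{-1/2} \sim (\log L)^{1/d}$ yields the dominant contribution $C\eta (\log L)^{d - 2/d + \epsilon}$. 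In the continuous case, the $\epsilon > 0$ arises through the Sobolev regularity argument used as in the proof of Lemma~\ref{le:3}.

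The main obstacle is the localization statement underlying the case split. Getting it with only $L^{-p}$ probability cost requires simultaneously: (i) ruling out, via~(H2), that some enlarged box $\widetilde P_j$ carries two eigenvalues below $E_* + \eta L^{-d}$; and (ii) controlling, via a Combes--Thomas resolvent expansion, the off-diagonal decay of the spectral projector of $H^P_{\omega,L}$ onto $[0, E_* + \eta L^{-d}]$ between distant $\widetilde P_j$. The trade-off between these two requirements, together with the strength of the Lifshitz input~(H3), is what fixes the polylogarithmic scale $\ell$ and ultimately produces the exponent $d - 2/d + \epsilon$.
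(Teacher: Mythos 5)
Your high-level decomposition (close vs.\ far localization centers, Minami for the close case, decorrelation plus Wegner for the far case, local restriction to polylogarithmic boxes) matches the paper's strategy, which uses the localization Lemma~\ref{Lemma4} and then proves Lemmas~\ref{Lemma3} and~\ref{Lemma5}. However, there are three concrete problems in your bookkeeping, two of which are genuine gaps.

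First, the attribution of the dominant term is reversed. Your Case~A computation correctly yields $C\eta E_*\ell^d\sim C\eta(\log L)^{d-2/d}$ (and $C\eta(\log L)^{d/\xi-2/d}$ if the boxes have side $(\log L)^{1/\xi}$ in the continuous case), but for $d\geq 2$ this exponent is positive, so Case~A is the \emph{dominant} contribution, not the subdominant one. The paper's Lemma~\ref{Lemma3} is exactly this term.

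Second, your Case~B estimate does not close. You sum over ordered pairs $(j,j')$ of boxes, conditioning on the box $\widetilde P_j$ and weighting by $\bP[\widetilde E_j\leq E_*]$. But assumption~(H3) only gives $\bP[\widetilde E_j\leq E_*]\leq Ce^{-\ell_0^d/C}$ with an \emph{unspecified} constant $C$, so $\sum_j \bP[\widetilde E_j\leq E_*]\leq M\cdot Ce^{-\ell_0^d/C}$ is, after substituting $M\sim(L/\ell)^d$ and $\ell_0\sim(\log L)^{1/d}$, a quantity of order $L^{d-1/C}$ up to polylogarithms. Depending on $C$ this is a positive power of $L$, not $(\log L)^{d-2/d+\epsilon}$; you cannot force it to be polylogarithmic from the one-sided Lifshitz bound alone. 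The paper's Lemma~\ref{Lemma5} sidesteps this entirely: for each cube $C_j$ containing the center of the first excited state, it pairs the local operator $(H_\omega)_{|C_j+\Lambda_{\lambda\log L/4}}$ with the operator on the single \emph{complementary} region $\Lambda_j^c$, whose ground state energy is the only relevant random energy. This avoids any sum over the second index $j'$, and a single Wegner estimate on the small box (conditioned on the independent complement) gives $\bP[\Omega_j]\leq C\eta L^{-d}(\log L)^d$; summing over $j$ the factors $J\sim (L/\log L)^d$ and $(\log L)^d$ cancel, producing exactly $C\eta$, not a $(\log L)$-correction.

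Third, the origin of the exponent $\epsilon$ is not Sobolev regularity as in Lemma~\ref{le:3}; that enters Proposition~\ref{Lemma2} only. Here $\epsilon=d(1/\xi-1)$ comes from the \emph{sub-exponential} localization estimate $|\varphi_{n,\omega}(x)|\lesssim L^q e^{-\alpha|x-x_n|^\xi}$ of Lemma~\ref{Lemma4} in the continuous case, which forces the covering boxes to have side $(\log L)^{1/\xi}$ with $\xi<1$ rather than $\log L$, enlarging $\ell^d$ by the extra polylogarithmic factor $(\log L)^{d(1/\xi-1)}$ in the Case~A/Minami bound.
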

\noindent In the localization regime, both the level-spacing and the
localization centers spacing have been studied in
e.g.~\cite{GeKl,KN}. The main difficulty arising in the present
setting is that the interval over which we need to control the spacing
is of length $C(\log L)^{-2/d}$; it is large compared to the length
scales dealt with in~\cite{GeKl,KN}.\\
\noindent Our analysis of the spectral gap relies on the description
of the ground state resulting from the analysis of the Anderson model
$H_\omega$ in the localized regime (see e.g.~\cite{Kirsch},
~\cite{Stol}). Under the assumptions made above on $H_\omega$, there
exists $I$ a compact interval containing $0$ such that, in $I$, the
assumptions of the Aizenman-Molchanov technique (see
e.g.~\cite{AENSS,MR2002h:82051}) or of the multi-scale analysis (see
e.g.\cite{GKsudec}) are satisfied. One proves
\begin{lemma}[\cite{GKsudec,Klopp2}]
  \label{Lemma4}
  There exists $\alpha>0$ such that, for any $p>0$, there exists $q>0$
  such that, for any $L\geq1$ and $\xi\in(0,1)$, there exists
  $\Omega_{I,\delta,L}\subset\Omega$ such that
  \begin{itemize}
  \item $\bP[\Omega_{I,\delta,L}]\geq 1-L^{-p}$,
  \item for $\omega\in\Omega_{I,\delta,L}$, one has that, if
    $\varphi_{n,\omega}$ is a normalized eigenvector of ${\rm
      H}_{\omega}|_{\Lambda_L}$ associated to $E_{n,\omega}\in I$, and
    $x_n(\omega)\in \Lambda_L$ is a maximum of
    $x\mapsto|\varphi_{n,\omega}(x)|$ on $\Lambda_L$ then, for
    $x\in\Lambda_L$, one has,
    \begin{equation}
      \label{eq:3}
      |\varphi_{n,\omega}(x)|\leq L^q\cdot
      \begin{cases}
        e^{-\alpha|x-x_n(\omega)|}\text{ in the discrete case,}\\
        e^{-\alpha|x-x_n(\omega)|^\xi}\text{ in the continuous case.}
      \end{cases}
    \end{equation}
  \end{itemize}
\end{lemma}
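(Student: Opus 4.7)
The plan is to apply the standard machinery of Anderson localization at the band edge: the multi-scale analysis (MSA) in its modern Germinet--Klein formulation, combined with the SUDEC (summable uniform decay of eigenfunction correlators) bound of~\cite{GKsudec}. Since the authors explicitly cite~\cite{GKsudec,Klopp2}, the job is to explain how the hypotheses~(H0)--(H3) feed into that machinery, and how the pointwise bound centered at the maximum $x_n(\omega)$ is extracted from it.

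\textbf{Step 1: verifying the MSA hypotheses near $E=0$.} The Wegner estimate~(H1) provides the probabilistic control on finite-volume spectra required at every length scale, while~(H0) supplies independence of Hamiltonians restricted to well-separated boxes. The third ingredient, the \emph{initial length scale estimate}, asks that for some $\ell_0$ and some $\theta$ large enough (depending only on $d$),
\[
\bP\bigl[\text{dist}(\sigma(H^\bullet_{\omega|\Lambda_{\ell_0}}), I_0) \geq \ell_0^{-1}\bigr] \geq 1 - \ell_0^{-\theta}
\]
for an interval $I_0 \ni 0$. This is furnished, with room to spare, by the Lifshitz tails hypothesis~(H3): a stretched-exponential probability controls a polynomial one. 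This opens an interval $I \ni 0$ on which MSA can be launched.

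\textbf{Step 2: running MSA.} I would then run the full Germinet--Klein bootstrap to all scales $L$. The output is an event $\Omega_{I,\delta,L}$ of probability at least $1-L^{-p}$ (for any preassigned $p$) on which the finite-volume Green's kernel satisfies
\[
|G^P_{\omega,L}(x,y;E)| \leq L^{q_0} e^{-\alpha|x-y|^\tau}
\]
for all $x,y \in \Lambda_L$ and all $E$ in a slight contraction of $I$, with $\tau=1$ in the discrete case and $\tau=\xi$ (for any prescribed $\xi \in (0,1)$) in the continuous case. The loss from exponential to sub-exponential decay in the continuous setting is the standard one, arising from Combes--Thomas estimates and the boundary terms produced by the geometric resolvent identity for differential operators.

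\textbf{Step 3: from Green's function decay to eigenfunction decay.} The SUDEC argument then converts this resolvent decay into the eigenfunction correlator bound
\[
|\varphi_{n,\omega}(x)|\cdot|\varphi_{n,\omega}(y)| \leq L^{q_1} e^{-\alpha|x-y|^\tau}
\]
valid simultaneously for all eigenpairs $(E_{n,\omega}, \varphi_{n,\omega})$ with $E_{n,\omega} \in I$ on the good event. Specializing $y = x_n(\omega)$ (the maximum point), normalization and $|\Lambda_L| \leq C L^d$ give $|\varphi_{n,\omega}(x_n(\omega))| \geq c L^{-d/2}$; dividing through yields the claimed bound with $q := q_1 + d/2$.

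\textbf{Main obstacle.} No genuinely new technical input is needed: the whole argument is an application of~\cite{GKsudec,Klopp2}. The main point to verify, and the only place where this paper's assumptions interact nontrivially with the cited machinery, is that~(H3) — stated for parallelepipeds with side lengths comparable to $L$ — is strong enough to serve as the initial length scale estimate. As noted, the stretched-exponential bound in~(H3) dominates any polynomial requirement, so the check is routine. The bookkeeping to ensure that a single event of probability $\geq 1 - L^{-p}$ carries the decay estimate \emph{uniformly in $n$} (i.e. for every eigenvalue in $I$ at once) is the most delicate part, and is precisely what the enhanced MSA of~\cite{GKsudec} is designed to provide.
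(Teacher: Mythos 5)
The paper offers no proof of this lemma: it is imported verbatim from \cite{GKsudec,Klopp2}, with the preceding paragraph merely asserting that the hypotheses of the multi-scale analysis (or the fractional-moment method) hold in a compact interval $I\ni 0$. Your outline --- initial length scale estimate from (H3), Wegner (H1) and independence (H0) feeding the Germinet--Klein bootstrap, then SUDEC plus the elementary lower bound $|\varphi_{n,\omega}(x_n(\omega))|\geq c L^{-d/2}$ to center the decay at the localization center --- is a correct and faithful reconstruction of exactly the cited machinery, so it matches the paper's (implicit) approach.
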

\noindent Note that, for a given eigenfunction, the maximum of its
modulus need not be unique but two maxima can not be further apart
from each other than a distance of order $\log L$. So for each
eigenfunction, we can choose a maximum of its modulus that we dub
center
of localization for this eigenfunction.\\
To prove Proposition~\ref{le:2}, we will   distinguish two cases
whether the localization centers associated to ${\rm E_0^P[\omega,L]}$
and ${\rm E_1^P[\omega,L]}$, say, respectively $x_0(\omega)$ and
$x_1(\omega)$ are close to or far away from each other.\\
In Lemma~\ref{Lemma3}, we show that the centers of localization being
close is a very rare event as a consequence of the Minami estimate.\\
In Lemma~\ref{Lemma5}, we estimate the probability of ${\rm
  E_0[\omega,L]}$ and ${\rm E_1[\omega,L]}$ being close to each other
when $x_0(\omega)$ and $x_1(\omega)$ are far away from each other. In
this case, ${\rm E_0[\omega,L]}$ and ${\rm E_1[\omega,L]}$ are
essentially independent of each other, and the estimate is obtained
using Wegner's estimate.
\begin{lemma}
  \label{Lemma3}
  For $p>0$, there exists $L_0>0$ such that, for $\lambda>0$, $L\geq
  L_0$ and $\eta\in(0,1)$, one has
  \begin{equation*}
    \bP\left[  
      \begin{aligned}
        &{\rm E_1^P[\omega,L]}-{\rm E_0^P[\omega,L]} \leq \eta\, L^{-d},\\
        &|x_0(\omega)-x_1(\omega)|\leq \lambda(\log L)^{1/\xi}
      \end{aligned}
    \right]\, \leq C\eta(\log L)^{d/\xi-2/d} + L^{-p} 
  \end{equation*}
  with $\xi=1$ in the discrete setting resp.  $\xi>1$ arbitrary in the
  continuous case.
\end{lemma}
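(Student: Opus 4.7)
The plan is to exploit the Minami estimate~(H2) at a length scale matched to the hypothesis $|x_0(\omega)-x_1(\omega)|\leq \lambda(\log L)^{1/\xi}$: both localization centers then lie in a common box of side $O((\log L)^{1/\xi})$. First I would fix the Lemma~\ref{Lemma4} parameter so that the (sub)exponential decay of eigenfunctions yields $L^q e^{-\alpha\ell_2^\xi}\leq L^{-2d-p-2}$ on a buffer scale $\ell_2\sim (\log L)^{1/\xi}$. With an inner scale $\ell_1=4\lambda(\log L)^{1/\xi}$, cover $\Lambda_L$ by $O(L^d/(\log L)^{d/\xi})$ inner boxes $B_\alpha$ of side $\ell_1$, translated on an $\ell_1/2$-spaced grid so that any two points of $\Lambda_L$ at mutual distance $\leq \lambda(\log L)^{1/\xi}$ lie in a common $B_\alpha$. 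For each $\alpha$, let $\widetilde B_\alpha\supset B_\alpha$ be the concentric enlargement of side $\ell:=\ell_1+2\ell_2\sim(\log L)^{1/\xi}$, and consider the Dirichlet restriction $H^D_{\omega|\widetilde B_\alpha}$.

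Working on the intersection $\Omega^\star$ of the localization event of Lemma~\ref{Lemma4} with the event from Lemma~\ref{Lemma1} that $E_0^P,E_1^P\in J_L:=[0,C(\log L)^{-2/d}]$ (each of probability $\geq 1-L^{-p}$), I claim that whenever both $x_0(\omega),x_1(\omega)\in B_\alpha$, the Hamiltonian $H^D_{\omega|\widetilde B_\alpha}$ has two eigenvalues $\widetilde E_0\leq\widetilde E_1$ with $|\widetilde E_j - E_j^P|\leq L^{-d-p-1}$ for $j=0,1$. Indeed, multiplying $\varphi_{0,\omega}$ and $\varphi_{1,\omega}$ by a smooth cutoff equal to $1$ on $B_\alpha$ and supported in $\widetilde B_\alpha$ produces a pair of almost-normalized, almost-orthogonal approximate eigenvectors of $H^D_{\omega|\widetilde B_\alpha}$ with residue $O(L^q e^{-\alpha\ell_2^\xi})$; a min-max argument then yields the claim. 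Consequently, on $\Omega^\star$ the event in the statement is contained in the union over $\alpha$ of the events that $H^D_{\omega|\widetilde B_\alpha}$ has at least two eigenvalues in a subinterval of $J_L+(-L^{-d-p-1},L^{-d-p-1})$ of length $\leq \eta L^{-d}+2L^{-d-p-1}$.

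It remains to perform a union bound. Partition the enlarged window $J_L+(-L^{-d-p-1},L^{-d-p-1})$ into $N\leq CL^d(\log L)^{-2/d}/\eta$ subintervals of length $\eta L^{-d}$, and let $\widetilde I_k$ denote the $2L^{-d-p-1}$-enlargement of the $k$-th subinterval (of length $\leq 2\eta L^{-d}$). The Minami estimate~(H2) applied to $H^D_{\omega|\widetilde B_\alpha}$ gives
\begin{equation*}
\pro\bigl[H^D_{\omega|\widetilde B_\alpha}\text{ has }\geq 2\text{ eigenvalues in }\widetilde I_k\bigr]\leq C\bigl(\eta L^{-d}\cdot(\log L)^{d/\xi}\bigr)^2,
\end{equation*}
and summing over the $O(L^d/(\log L)^{d/\xi})$ boxes and the $N$ subintervals yields
\begin{equation*}
\frac{L^d}{(\log L)^{d/\xi}}\cdot\frac{L^d(\log L)^{-2/d}}{\eta}\cdot C\eta^2 L^{-2d}(\log L)^{2d/\xi}=C\eta(\log L)^{d/\xi-2/d},
\end{equation*}
which, together with the $L^{-p}$ absorbed from $\Omega\setminus\Omega^\star$, gives the claim.

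The main obstacle is ensuring that the approximation error between global and local eigenvalues stays strictly below the Minami resolution $\eta L^{-d}$; this requirement is what fixes the buffer scale $\ell_2$. The discrete case with pure exponential decay is comfortable. In the continuous case the decay from Lemma~\ref{Lemma4} is only sub-exponential for the admissible parameters, which forces the strict inequality $\xi>1$ in the distance scale and ultimately produces the arbitrarily small loss $\epsilon>0$ appearing in Proposition~\ref{le:2}.
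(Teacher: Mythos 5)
Your proposal is correct and follows essentially the same strategy as the paper's proof: use Lemma~\ref{Lemma4} to replace $(E_0^P,E_1^P)$ by eigenvalues of a finite-volume restriction on a box of side $\sim(\log L)^{1/\xi}$ containing both localization centers, cover the Lifshitz-tail window $[0,C(\log L)^{-2/d}]$ of Lemma~\ref{Lemma1} by $O(L^d(\log L)^{-2/d}/\eta)$ subintervals of length $O(\eta L^{-d})$, and apply the Minami estimate together with a union bound over the $O(L^d(\log L)^{-d/\xi})$ boxes and the subintervals. The only cosmetic differences are the unified discrete/continuous treatment via the single parameter $\xi$ and a slight bookkeeping imprecision: the enlargement of each subinterval should be of size $O(\eta L^{-d})$ (equivalently, use overlapping intervals of length $\sim 4\eta L^{-d}$ as the paper does) rather than merely $2L^{-d-p-1}$, so that both shifted local eigenvalues are guaranteed to fall in a common $\widetilde I_k$; this only changes the constant in the final estimate.
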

\begin{proof}
  Let us start with the discrete setting.
  Fix $p>0$ and let $q$ be given by Lemma~\ref{Lemma4}.
  The basic observation  following from Lemma~\ref{Lemma4} is that,
  for $\omega\in\Omega_{I,\delta,L}$, if $x_n(\omega)$ is the
  localization center of $\varphi_n(\omega,L)$ and $l\leq L$, then
  \begin{equation}
    \label{eq:10}
    \|(H^0_\omega-E_n^P(\omega,L))\tilde\varphi_n(\omega,L,l)\|+
    \left|\|\tilde\varphi_n(\omega,L,l)\|-1\right|\leq C L^qe^{-\alpha l}.
  \end{equation}
  where
  \begin{itemize}
  \item $H^0_\omega=[H^P_{\omega,L}]_{|x_n(\omega)+\Lambda_l}$ is
    $H^P_{\omega,L}$ restricted to the cube $x_n(\omega)+\Lambda_l$,
  \item $\tilde\varphi_n(\omega,L,l)=\car_{x_n(\omega)+\Lambda_l}
    \varphi_n(\omega,L)$ is the eigenfunction $\varphi_n(\omega,L)$
    restricted to the cube $x_n(\omega)+\Lambda_l$.
  \end{itemize}
  To apply the observation above we pick a covering $(C_j)_{0\leq
  j\leq J}$ of $\Lambda_L$ by cubes of side length of order $\log L$ i.e.
  $\Lambda_L\subset\bigcup_{0\leq j\leq J}C_j$. Then the number of cubes $J$
  can be estimated by $J\leq C L^d(\log L)^{-d}$ and there exists $C>0$ 
  (depending on $\lambda$, $q$ and $\nu$)
  such that, if $|x_0(\omega)-x_1(\omega)|\leq \lambda\log L$ and
  $l\geq C\lambda\log L$, there exists a cube $C_j$ (containing
  $x_0(\omega)$) such that, for $L$ sufficiently large
  \begin{multline*}
    \sum_{k=0}^1\left(\|(H^j_\omega-E_k^P(\omega,L))
      \tilde\varphi_k(\omega,L,j)\|+|\|\tilde\varphi_k(\omega,L,j)\|-1|\right)
    \\+|\langle\tilde\varphi_0(\omega,L,j),
    \tilde\varphi_1(\omega,L,j)\rangle|\leq L^{-\nu}/2
  \end{multline*}
  where we have set $q-C\lambda\alpha<-\nu$ (see~\eqref{eq:10}) and
  \begin{itemize}
  \item $H^j_\omega$ is the operator $H_\omega$ restricted to the cube
    $C_j+\Lambda_l$,
  \item $\tilde\varphi_k(\omega,L,j)=\car_{C_j+\Lambda_l}
    \varphi_k(\omega,L)$ for $k\in\{0,1\}$.
  \end{itemize}
  $\;$ \\
  Let $C$ be given by Lemma~\ref{Lemma1} and define $I=[0,2C(\log
  L)^{-2/d}]$. Decompose $I\subset\cup_{m=0}^{2M+1}I_m$ where
  \begin{itemize}
  \item $I_m$ are intervals of length $4\eta L^{-d}$,
  \item for $m\in\{0,\dots,M-1\}$, $I_{2m}\cap I_{2(m+1)}
    =\emptyset=I_{2m+1}\cap I_{2m+3}$,
  \item for $m\in\{0,\dots,M\}$, $I_{2m}\cap I_{2m+1}$ is of length
    $2\eta L^{-d}$.
  \end{itemize}
  One can choose $M\leq CL^d(\log L)^{-2/d}\eta^{-1}$.  This
  implies that, for $L$ sufficiently large,
  \begin{equation*}
    \left\{\omega;\ 
      \begin{aligned}
        &{\rm E_1^P[\omega,L]}-{\rm E_0^P[\omega,L]} \leq \eta\, L^{-d}\\
        &|x_0(\omega)-x_1(\omega)|\leq \lambda\log L
      \end{aligned}
    \right\}\subset\Omega_1\cup\Omega_2
  \end{equation*}
  where $\Omega_1=\Omega\setminus\Omega_{I,\delta,L}$ and
  \begin{equation*}
    \Omega_2=\bigcup_{j=1}^J\bigcup_{m=0}^{2M+1}\{ 
    (H_\omega)_{|C_j+\Lambda_l}\text{ has two eigenvalues in }I_m\}.
  \end{equation*}
  By Lemma~\ref{Lemma4}, we know that
  \begin{equation*}
    \pro[\Omega_1]\leq L^{-p}
  \end{equation*}
  Minami's estimate (H.2) and the estimate on $M$ tells us that
  \begin{equation*}
    \pro[\Omega_2]\leq CL^{2d}(\log L)^{-d-2/d}\eta^{-1}
    (\eta L^{-d}(C\log L)^d)^2\leq C\eta(\log L)^{d-2/d}.
  \end{equation*}
  This completes the proof for the discrete setting.  The proof for
  the continuous case is very similar. One has to replace
  $\car_{x_n(\omega)+\Lambda_l}$ by a smooth version of the
  characteristic function of the cube $x_n(\omega)+\Lambda_l$ (see for
  example \cite{Stosto}), resp. change the length scale $\log L$ to
  $(\log L)^{1/\xi}$ in the side length of the boxes where one
  restricts the eigenfunctions. This is necessary because of the
  weaker estimate in Lemma~\ref{Lemma4}.
  This completes the proof of Lemma~\ref{Lemma3}.
\end{proof}
\noindent We now estimate the probability of the spectral gap being
small conditioned on the fact that the localization centers are far
away from one another. We prove
\begin{lemma}
  \label{Lemma5}
  For any $p>0$, there exists $\lambda>0$ and $ C>0$ such that, for
  $L$ sufficiently large and $\eta\in(0,1)$, one has
  \begin{equation*}
    \bP\left[
      \begin{aligned}
        &{\rm E_1^P[\omega,L]}-{\rm E_0^P[\omega,L]} \leq \eta\, L^{-d},\\
        &|x_0(\omega)-x_1(\omega)|\geq \lambda(\log L)^{1/\xi}
      \end{aligned}
    \right]\,
    \leq C\eta+L^{-p}
  \end{equation*}
  with $\xi=1$ in the discrete setting, resp. $\xi>1$ arbitrary in
  the continuous case.
\end{lemma}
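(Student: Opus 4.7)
The plan is to use exponential localization (Lemma~\ref{Lemma4}) to approximate $E_0$ and $E_1$ by ground states of independent restricted operators on small boxes centered near $x_0$ and $x_1$, and then to bound the gap between them by an order-statistic argument for i.i.d.\ random variables controlled via the Wegner estimate (H1). The direct pair union bound between two independent box-energies is insufficient (it loses a factor of the number of boxes), so the order-statistic structure is essential.

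\textbf{Decoupling via localization and (H0).} Work on the set $\Omega_{I,\delta,L}$ of Lemma~\ref{Lemma4}, of probability at least $1-L^{-p}$, where $\varphi_0,\varphi_1$ decay away from $x_0,x_1$ at (stretched) exponential speed. Fix a scale $\ell=c(\log L)^{1/\xi}$, with $c$ large enough that the localization tails outside a cube of side $\ell$ around each $x_k$ are bounded by $L^{-\nu}$ for a prescribed $\nu\ge d+p$. Partition $\Lambda_L$ into cubes $(B_j)_{1\le j\le J}$ of side $\ell$, with $J\le C(L/\ell)^d$, and split the indices into $2^d$ sub-collections $(\mathcal J_l)_{l=1,\dots,2^d}$ via the checkerboard construction already used in the proof of Lemma~\ref{Lemma1}, so that within each $\mathcal J_l$ the enlarged cubes $B_j^+=B_j+\Lambda_\ell$ are at mutual distance at least $R$. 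By (H0), the operators $(H^\bullet_{\omega|B_j^+})_{j\in\mathcal J_l}$ are then i.i.d. Taking $\lambda$ large (depending on $c$ and $R$), the hypothesis $|x_0-x_1|\ge\lambda(\log L)^{1/\xi}$ forces $\dist(B_{j_0}^+,B_{j_1}^+)\ge R$ where $B_{j_k}\ni x_k$. Setting $\nu_j:=\inf\sigma(H^\bullet_{\omega|B_j^+})$ and invoking the estimate~\eqref{eq:10} from the proof of Lemma~\ref{Lemma3}, one shows that each low-lying eigenvalue of $H^P_{\omega,L}$ agrees with some $\nu_j$ up to $L^{-\nu}$, the correspondence being one-to-one at the bottom of the spectrum. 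In particular, on the event of interest, $\nu_{(2)}-\nu_{(1)}\le\eta L^{-d}+2L^{-\nu}$.

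\textbf{Order-statistic bound.} By (H1) each $\nu_j$ has a Lebesgue density bounded by $\bar\rho\le C\ell^d$. Within one sub-collection $\mathcal J_l$, the $(\nu_j)_{j\in\mathcal J_l}$ are i.i.d., and a short computation gives
\begin{equation*}
  \bP\bigl[\nu^{(l)}_{(2)}-\nu^{(l)}_{(1)}\le s\bigr]\le |\mathcal J_l|\,s\,\bar\rho\le C s L^d,
\end{equation*}
the key being that integrating the joint density of the two smallest of $n$ i.i.d.\ variables against $(1-F)^{n-2}$ yields $n s\bar\rho$ rather than the $n^2 s\bar\rho$ of a pair union bound. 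The minimum $\nu^{(l)}_{(1)}$ itself has density at most $|\mathcal J_l|\bar\rho\le CL^d$, hence for $l\ne l'$ the independence of minima across sub-collections gives
\begin{equation*}
  \bP\bigl[\,\bigl|\nu^{(l)}_{(1)}-\nu^{(l')}_{(1)}\bigr|\le s\bigr]\le 2sCL^d.
\end{equation*}
Since $(j_0,j_1)$ lies either in the same sub-collection or in two different ones, and in either case the event forces the corresponding gap to be at most $\eta L^{-d}+2L^{-\nu}$, summing both kinds of estimates over the at most $2^{2d}$ pairs $(l,l')$ with $s=\eta L^{-d}+2L^{-\nu}$ yields a total bound of the form $C\eta+CL^{d-\nu}$; the choice $\nu\ge d+p$ absorbs the second term into $L^{-p}$, completing the proof.

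\textbf{Main obstacle.} The crucial gain is the factor $|\mathcal J_l|$ saved by the order-statistic calculation: without it, the surviving factor $|\mathcal J_l|\bar\rho s\sim \eta L^d$ would be too large by exactly a polynomial in $L$. Setting up the correspondence $E_k\leftrightarrow\nu_{j_k}$ with a uniform $L^{-\nu}$ loss also requires some care, especially in the continuous case where the decay in Lemma~\ref{Lemma4} is only stretched-exponential, forcing $\ell$ (and the buffer used to define $B_j^+$) to be a slightly larger power of $\log L$; this is exactly the reason for the strict inequality $\xi>1$ in the statement.
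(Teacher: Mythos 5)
Your overall strategy — use Lemma~\ref{Lemma4} and (H0) to replace the spectral gap of $H^P_{\omega,L}$ by a gap between ground states of small independent boxes, then estimate via Wegner — is sound, and the order-statistic idea that $\bP[\nu_{(2)}-\nu_{(1)}\le s]\le ns\bar\rho$ for $n$ i.i.d.\ variables of density $\le\bar\rho$ is the right kind of ``save a factor of $n$'' computation. Note however that the paper's proof takes a genuinely different route: instead of comparing box ground states to each other, it fixes the cube $C_j\ni x_1$ and splits $\Lambda_L$ into the \emph{small} box $C_j+\Lambda_{\lambda\log L/4}$ and the \emph{large} complement $\Lambda_j^c$, which are at distance $\ge R$ hence independent; on the good set, $\inf\sigma((H_\omega)_{|\Lambda_j^c})\approx E_0$ up to $L^{-2\nu}$, and Wegner on the small box (volume $(\log L)^d$) alone gives $\bP\le C(\eta L^{-d}+L^{-\nu})(\log L)^d$; summing over the $\le C(L/\log L)^d$ cubes $C_j$ gives $C\eta+CL^{d-\nu}$. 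Only one union over $j$ is needed, and the ``factor of $n$'' is saved automatically because one side is frozen by the conditioning.

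There is a concrete gap in your argument, precisely where you need the cross-sub-collection case. You assert that $\nu^{(l)}_{(1)}$ and $\nu^{(l')}_{(1)}$ are independent for $l\ne l'$. They are not: the enlarged boxes $B_j^+$ in different sub-collections overlap, so for $j\in\mathcal J_l$ and $j'\in\mathcal J_{l'}$ adjacent, $\nu_j$ and $\nu_{j'}$ are built from the same random variables and are strongly correlated, hence so are the minima. The checkerboard construction separates boxes \emph{within} a color class, not across classes; there is no coloring that makes overlapping regions independent. Without this independence the bound $\bP\bigl[|\nu^{(l)}_{(1)}-\nu^{(l')}_{(1)}|\le s\bigr]\le 2s\,CL^d$ does not follow, and falling back to a union bound over pairs $j_0\in\mathcal J_l$, $j_1\in\mathcal J_{l'}$ with $\dist(B_{j_0}^+,B_{j_1}^+)\ge R$ costs a factor of the number of boxes and ruins the estimate. (A separate minor point: with side-$\ell$ tiles and a $2^d$-coloring, same-color $B_j^+=B_j+\Lambda_\ell$ are still overlapping, since within a color the original boxes are only $\ell$ apart while the enlargement adds $2\ell$; you would need at least a $4^d$-coloring, but this is easy to fix.) To close the real gap you would have to somehow condition out the overlap dependence, which, pursued to its end, essentially reproduces the paper's device of treating the complement of the small box as one block and applying Wegner to the small box alone.
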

\begin{proof}
  Using the same line of reasoning as in the proof of
  Lemma~\ref{Lemma3} we give the proof in the discrete setting.\\[2mm]
  Fix $\nu>2d+p$ and split the interval $[0,C(\log L)^{-2/d}]$ into
  intervals of length $L^{-\nu}$ as in the proof of
  Lemma~\ref{Lemma3}. By Minami's estimate, we know that, for $L$
  sufficiently large
  \begin{equation}
    \label{eq:11}
    \bP\left[ 
      {\rm E_1^P[\omega,L]}-{\rm E_0^P[\omega,L]} \leq L^{-\nu}
    \right]\,\leq C(\log L)^{-2/d}L^\nu L^{2(d-\nu)}\leq L^{-p}.
  \end{equation}
  So we may assume that ${\rm E_1^P[\omega,L]}-{\rm E_0^P[\omega,L]} \geq
  L^{-\nu}$.\\[2mm]
  As in the proof of Lemma~\ref{Lemma3}, pick a covering of
  $\Lambda_L$ by cubes, say $(C_j)_{0\leq j\leq J}$ of side length
  less than $\log L$ such that $J$, the number of cubes, satisfies
  $J\leq C(L/\log L)^d$.\\
  Assume that $C_j$ is the cube containing
  $x_1(\omega)$, ${\rm E_1[\omega,L]}-{\rm E_0[\omega,L]} \leq \eta\,
  L^{-d}$ and $|x_0(\omega)-x_1(\omega)|\geq \lambda\log L$.
  Let $\Lambda_j^c=\Lambda_L\setminus (C_j+\Lambda_{3/4\lambda\log
   L }) $. Define the operators $(H_\omega)_{|\Lambda_j^c}$,
  resp. $(H_\omega)_{|C_j+\Lambda_{\lambda\log L/4}}$  to be the
  restriction of $H^P_{\omega,L}$ to $\Lambda_j^c$,
  resp. $C_j+\Lambda_{\lambda\log L/4}$, with Dirichlet boundary
  conditions. If $\lambda\geq 8$ and $L$ is large enough, we know that
  \begin{itemize}
  \item dist$(x_0(\omega),\partial\Lambda_j^c)\geq\lambda \log
    L-3/4\lambda\log L -\log L\geq \lambda\log L/8 $,
  \item dist$(x_1(\omega),\partial(C_j+\Lambda_{\lambda\log L/4}))\geq
    \lambda\log L/4$
  \item dist$(\Lambda_j^c, C_j+\Lambda_{\lambda\log L/4}) \geq
    \lambda\log L/2\geq R $ 
  \end{itemize}
  with $R>0$ as in the decorrelation assumption (H0). Hence, for $\lambda$ sufficiently large, using the
  estimate~(\ref{eq:10}) for the operators $(H_\omega)_{|\Lambda_j^c}$
  and $(H_\omega)_{|C_j+\Lambda_{\lambda\log L/4}}$, we know that:
  \begin{itemize}
  \item The operator $(H_\omega)_{|C_j+\Lambda_{\lambda\log L/4}}$
    admits an eigenvalue, say $\tilde E_1(\omega)$, that satisfies
    $|\tilde E_1(\omega) - E_1^P(\omega)|\leq L^{-2\nu}$;
  \item The operator $(H_\omega)_{|\Lambda_j^c}$ admits an eigenvalue,
    say $\tilde E_0(\omega)$, that satisfies $|\tilde E_0(\omega) -
    E_0(\omega)|\leq L^{-2\nu}$. Moreover, as
    $(H_\omega)_{|\Lambda_L^c}$ is the Dirichlet restriction of
    $H^P_{\omega,L}$, its eigenvalues are larger than those of
    $H^P_{\omega,L}$. In particular, its second eigenvalue is larger
    than $E_1(\omega)$. Hence, up to a small loss in probability, we
    may assume it is larger than $E_0(\omega)+L^{-\nu}$ as we know the
    estimate~(\ref{eq:11}). This implies that we may assume that
    $\tilde E_0(\omega)$ is the ground state of
    $(H_\omega)_{\Lambda_j^c}$.
  \end{itemize}
  So we obtain  
  \begin{equation*}
    \left\{\omega;\ 
      \begin{aligned}
        {\rm E_1^P[\omega,L]}-{\rm E_0^P[\omega,L]}&\leq\eta\,L^{-d},\\
        |x_0(\omega)-x_1(\omega)|&\geq\lambda\log L
      \end{aligned}
    \right\}\subset\Omega_1
    \cup\bigcup_{1\leq j\leq J}\Omega_j
  \end{equation*}
  where $\Omega_1=\Omega\setminus(\Omega_{I,\delta,L}\cup\{\omega ;\
  {\rm E_1[\omega,L]}-{\rm E_0[\omega,L]} \leq L^{-\nu}\}$ and
  \begin{equation*}
    \Omega_j=\left\{\omega;
      \text{dist}(\sigma((H_\omega)_{|C_j+\Lambda_{\lambda\log L/4}}), 
      \inf\sigma((H_\omega)_{|\Lambda_j^c}))\leq \eta L^{-d}+L^{-\nu}
    \right\}
  \end{equation*}
  As $(H_\omega)_{|C_j+\Lambda_{\lambda\log L/4}}$ and
  $(H_\omega)_{|\Lambda_j^c}$ are independent of each other, we
  estimate the probability of $\Omega_j$ using Wegner's estimate to
  obtain
  \begin{equation*}
    \bP[\Omega_j]\leq C(\eta L^{-d}+L^{-\nu})(\log L)^d.
  \end{equation*}
  Hence, one obtains
  \begin{equation*}
    \begin{split}
      \bP\left[
        \begin{aligned}
          &{\rm E_1[\omega,L]}-{\rm E_0[\omega,L]} \leq \eta\, L^{-d},\\
          &|x_0(\omega)-x_1(\omega)|\geq \lambda\log L
        \end{aligned}
      \right] &\leq C(\eta L^{-d}+L^{-\nu})(\log L)^d\frac{L^d}{(\log
        L)^d}+2L^{-p} \\&\leq C(\eta+L^{-p})
    \end{split}
  \end{equation*}
  if $\nu>d+p$.\\[2mm]
  This completes the proof in the discrete setting.  To prove
  Lemma~\ref{Lemma5} for the continuous case, one does the same
  modifications as in the proof of Lemma~\ref{Lemma3} in the
  continuous setting.
\end{proof}
\noindent Setting $\varepsilon=d(1/\xi-1)$, Proposition~\ref{le:2}
then follows from Lemma~\ref{Lemma3} and Lemma~\ref{Lemma5}.
\section{Proof of Theorem~\ref{Theorem1}}
\label{sec:proof-cond-single}
\noindent Defining $\pi_0=|\varphi_0\rangle\langle \varphi_0|$ and
applying the definition of the ground state, we can estimate
\begin{align*}
  {\rm E_1^P[\omega,L]}&\|(1-\pi_0)\varphi^{\rm GP}\| +{\rm
    E_0^P[\omega,L]}\| \pi_0 \varphi^{\rm GP}\|\\ &\leq {\rm
    E}_{\omega,L}^{\rm GP}\|(1-\pi_0)\varphi^{\rm GP}\| +{\rm
    E}_{\omega,L}^{\rm GP}\| \pi_0 \varphi^{\rm GP}\|,
\end{align*}
respectively
\begin{equation*}
  \left( {\rm E_1^P[\omega,L]}-{\rm E}_{\omega,L}^{\rm GP}\right)
  \|(1-\pi_0)\varphi^{\rm GP}\| \leq \left( {\rm E}_{\omega,L}^{\rm
      GP}-{\rm E_0^P[\omega,L]}\right) \| \pi_0 \varphi^{\rm GP}\|.
\end{equation*}
As a consequence of Proposition~\ref{Lemma2} and
Proposition~\ref{le:2}, we know with a probability larger than
$1-(C\eta+L^{-p})$ that, for $\eta\in(0,1)$ and $f_d$ defined
in~(\ref{eq:12}) the estimates
\begin{equation*}
  {\rm E_1^P[\omega,L]}-{\rm E}_{\omega,L}^{\rm GP} \geq {\rm
    E_1^P[\omega,L]}-{\rm E_0^P[\omega,L]} \geq \eta L^{-d}
  [1+(\log L)^{d-2/d+\epsilon}]^{-1} 
\end{equation*}
and
\begin{equation*}
  {\rm E}_{\omega,L}^{\rm GP}-{\rm E_0^P[\omega,L]}\leq C U f_d(\log L)
\end{equation*}
are satisfied.
We obtain
\begin{equation*}
  \|(1-\pi_0)\varphi^{\rm GP}\| \leq C U f_d(\log L) \eta^{-1} L^d[1+(\log L)^{d-2/d+\epsilon}]\;
  \|\pi_0\varphi^{\rm GP}\|
\end{equation*}
and
\begin{equation*}
  \begin{split}
    |\langle\varphi_0,\varphi^{\rm GP}\rangle|^2&=\| \pi_0
    \varphi^{\rm GP}\|^2= 1-\| (1-\pi_0)\varphi^{\rm GP}\|^2\\
    &\geq 1-\left[CU f_d(\log L) \eta^{-1} L^d [1+(\log
      L)^{d-2/d+\epsilon}]\right]^2.
  \end{split}
\end{equation*}
Applying the assumption concerning the coupling constant $U$ i.e.
\begin{equation*}
  U=U(L)=  o\left(L^{-d}[1+(\log L)^{d-2/d+\epsilon}]^{-1}[f_d(\log
    L)]^{-1}\right)    
\end{equation*}
and setting
\begin{equation*}
  \eta=\eta(L)=\sqrt{|U(L)
    L^{d}[1+(\log L)^{d-2/d+\epsilon}]f_d(\log L)|}
\end{equation*}
we get that, $\eta(L)\to0$ when $L\to+\infty$ and for some $C>0$,
\begin{equation*}
  \bP(\{\omega; ||\langle\varphi_0,\varphi^{\rm GP}\rangle|-1|\geq
  C\eta(L)\})\leq C(\eta(L)+L^{-p}).
\end{equation*}
This completes the proof of Theorem~\ref{Theorem1}.

\end{document}